\newtheorem{theorem}{Theorem}[section]
\newtheorem{proposition}[theorem]{Proposition}
\newtheorem{lemma}[theorem]{Lemma}
\theoremstyle{definition}
\newtheorem{definition}[theorem]{Definition}
\newtheorem{example}[theorem]{Example}
\newtheorem*{theorem*}{Theorem}
\newcounter{mytablefootnote}
\newcommand{\tablefootnotemark}{\protect\footnotemark{}\stepcounter{mytablefootnote}}
\newcommand{\rollbackfootnotecounterreal}{\addtocounter{footnote}{-\value{mytablefootnote}}}
\newcommand{\tablefootnotemarknullcite}[1]{\tablefootnotemark}
\newcommand{\tablefootnotetextnullcite}[1]{}
\newcommand{\Infer}{\Longrightarrow}
\newcommand{\M}{\mathcal{M}}
\newcommand{\R}{\mathbb{R}}
\newcommand{\Bids}{\mathbb{Q}}
\renewcommand{\b}{\boldsymbol{b}}
\newcommand{\p}{\boldsymbol{p}}
\newcommand{\Sampler}{\textsf{Sampler}}
\renewcommand{\L}{\mathcal{L}}
\newcommand{\KL}{\textsf{KL}}
\newcommand{\RL}{\textsf{RL}}
\newcommand{\Loss}{\textsc{Loss}}
\newcommand{\alf}{aggregated loss function\xspace}
\newcommand{\alfs}{aggregated loss functions\xspace}
\newcommand{\analf}{an \alf}
\newcommand{\Tunder}{T_+}
\newcommand{\Tover}{T_-}
\DeclareMathOperator{\E}{\mathbb{E}}
\newcommand{\longversion}[2]{#2}
\newcommand{\hybridtext}[2]{\shortversion{#1}{#2}}
\newcommand{\modelname}{Google Bard}
\newcommand{\rollbackfootnotecounter}{\rollbackfootnotecounterreal}
\newcommand{\tableonesize}{\scriptsize}
\renewcommand{\hybridtext}[2]{\longversion{#1}{#2}}
\title{Mechanism Design for Large Language Models\thanks{An extended abstract appeared in the Proceedings of WWW 2024. The work of Haifeng Xu was done as Visiting Faculty at Google Research.
We thank Dirk Bergemann, Marina Halac, Philipp Strack, Elliot Lipnowski, Yang Cai, Vasilis Syrgkanis, Negin Gorezaei,  Ido Cohen, Yoav Nagel, Yael Shemesh as well as the participants of the Yale Economics Seminar, the Stanford MS\&E Seminar, and the WWW 2024 conference for invaluable comments and suggestions to improve this manuscript. We are specially grateful to Yong Cheng from Google DeepMind for his expert guidance on the LLM-related details and literature.}}
\date{}
\author{Paul D\"{u}tting\thanks{Google Research, \texttt{\{duetting,mirrokni,renatoppl,szuo\}@google.com}.} \and Vahab Mirrokni$^{\dag}$ \and Renato Paes Leme$^{\dag}$ \and Haifeng Xu\thanks{University of Chicago \& Google Research, \texttt{haifengxu@uchicago.edu}.} \and Song Zuo$^{\dag}$}
\begin{document}

\maketitle

\begin{abstract}
  We investigate auction mechanisms for AI-generated content, focusing on applications like ad creative generation. In our model, agents' preferences over stochastically generated content are encoded as large language models (LLMs). We propose an auction format that operates on a token-by-token basis, and allows LLM agents to influence content creation through single dimensional bids. We formulate two desirable incentive properties and prove their equivalence to a monotonicity condition on output aggregation. This equivalence enables a second-price rule design, even absent explicit agent valuation functions. Our design is supported by demonstrations on a publicly available LLM.

\end{abstract}

\section{Introduction}
In the current web ecosystem, auctions are the primary mechanism used to decide which ads (and commercial content more broadly) are displayed to users \citep{EdelmanOS07,Varian07}.
In these auctions, advertisers bid for the opportunity to display their ad creatives alongside organic content.
Many of the web formats such as text, banners, video, apps, ... have their own subtleties which led to the development of new auction tools to handle them. Our goal in this paper is to investigate auction mechanisms to support the emerging format of AI-generated content. More specifically, we explore the use of auctions as a tool for influencing the output of large language models (LLMs) \cite[e.g.,][]{brown2020language}.

We consider a situation where a certain space in the web (which could be a part of a webpage, an UI element of an AI-chatbot, the dialog of a certain character in a video or a game, etc.) is
designated for commercial content and different advertisers can bid to influence the content in that space. Each advertiser has an LLM that can generate content for that space, and is willing to pay a certain amount of money for the right to have their content displayed. A simple design is to collect bids from advertisers and let the highest bidder choose whatever content they wish to publish in that space. While simple, this design does not exploit the flexibility of LLMs which is to combine different concepts in a creative way.

Consider this example. First, we ask an LLM to produce different ads for the fictitious Stingray Resort and the equally fictitious Maui Airlines:
\begin{itemize}
    \item ``\emph{Experience the magic of Hawaii at Stingray Resort, where stunning views, luxurious accommodations, and endless activities await. Book your stay today and create unforgettable memories in the heart of paradise}.''
    \item ``\emph{Fly to Hawaii with Maui Airlines and experience the beauty of the Aloha State. We offer affordable flights to all the major islands, so you can start your Hawaiian vacation sooner. Book your flight today and let the island spirit take over!}''
\end{itemize}
For that use case, however, the LLM is flexible enough to produce a joint ad for both:
\begin{itemize}
\item \emph{``Fly to paradise with Maui Airlines and experience the magic of Hawaii at Stingray Resort. Stunning views, luxurious accommodations, and endless activities await. Book your dream vacation today and create unforgettable memories.''}
\end{itemize}
One can envision an auction mechanism that allows both Stingray Resort and Maui Airlines to submit their LLMs and bids, with these inputs determining their prominence in the final outcome.\footnote{While this work's main focus is to create ad creatives that merge content from different advertisers, our designed auction mechanism  for merging LLM outputs could also be used in other contexts.}

\subsection{Unique Challenges}\label{sec:challenges}

LLMs \citep{brown2020language,lamnda,palm2} are an emerging technology with new and unconventional aspects, many of which have direct implications to auction design (e.g., how preferences are represented/expressed). Our goal is to identify some of the key challenges and take a first step in designing mechanisms to address them:

\begin{itemize}
\item {\bf Modelling and Expressing Preferences.} Auction theory typically models preferences via value functions that assign a value to each outcome. LLMs, however, as {\em generative models}, do not directly assign values. Instead, they succinctly encode preferences over outcomes within a stateless neural network model that predicts continuation probabilities.

\item {\bf Necessity of Randomization.} LLMs crucially rely on randomization. When forced to output tokens deterministically, LLMs often have a worse performance compared to situations that sample from a distribution~(see, e.g.,~ \citep{holtzman2019curious}, for a performance comparison of different decoding strategies). Therefore, an auction that aggregates LLM outputs should preferably also output distributions.

\item {\bf Technical Compatibility.}
Auction solutions should be compatible with current LLM technology, utilizing readily available information and integrating seamlessly.
Ideally, the allocation and payments should be obtained from simple manipulations of the LLM outputs.

\item {\bf Computational Efficiency.} LLM models are expensive to query, so the auction computation should not add too much overhead. In particular, auctions should not increase the number of calls to inference the models beyond the minimum necessary.
\end{itemize}

\subsection{Our Contributions}\label{sec:intro:model}

\paragraph{The Token Auction Model.}
Our first contribution is a formalism (``The Token Auction Model'')
for studying this problem. {\em Tokens} are the units making up sentences and paragraphs.\footnote{More generally, one can consider tokens forming parts of images~\citep{ramesh2021zero,yu2022scaling} and videos~\citep{sun2019videobert}. For the purpose of this paper, we will stick with text generation.} Examples of tokens include (sub-)words, symbols, numbers, and special tokens indicating the beginning and ending of the text. In particular, any piece of text (potentially incomplete) can be represented as an array of tokens, and any array of tokens also encodes a piece of text.

One salient feature of the state-of-the-art LLMs is that they are stateless, i.e., they maintain no internal memory or state. Instead, they simply map a prefix string to a distribution over the next token. The output is then created in an autoregressive manner. Given an input prompt, the output is generated by repeatedly feeding the current sequence of tokens into the LLM, sampling a continuation token, and appending it to the sequence of tokens.

The proposed \emph{token auction} operates on a token-by-token basis, and serves to aggregate several LLMs to generate a joint output. We assume the designer has access to algorithmic LLM agents represented by their respective text generation functions (the functions that map a sequence of tokens to a distribution over the next token). In addition, we allow each LLM agent to submit a single dimensional bid.
The auction output will be an aggregated distribution together with a payment rule that defines payments for each agent.\footnote{See our discussion later this section on the rationale of the indirect mechanism formulation.}

This approach may seem counterintuitive initially, as advertisers typically focus on the final generated text rather than individual word choices.
This seems to suggest a dynamic planning of the generated token sequence. However, existing LLMs do not reason about full pieces of text, nor do they plan ahead; instead, their preferences are expressed as desired distributions over merely the next token.
In other terms, we can think of an LLM as a succinct distillation of an agent's
complex combinatorial preferences over sequences of tokens into a generative token-by-token model.\footnote{See our discussion in Section~\ref{sec:aggregation_functions}, and Propositions~\ref{prop:kl_aggreg} and \ref{prop:rl_aggreg} for additional support for the stateless approach.}

The problem of aggregating LLMs forces the designer to understand the preferences of the agents away from the distilled LLM. This appears to be a very difficult problem.
Specifically, we believe it is implausible or at least impractical to assume an individual agent can meaningfully manipulate the distribution over tokens at any given stage, to direct the produced text to a more preferred one.
Our auction formulation seeks to strike a balance: By truthfully revealing the LLM to the designer, the agent gives the auction mechanism a hint as to what their preferred distribution is. The bids, in turn, can be used to tradeoff between agents, and in particular help the designer determine their relative weights.

\paragraph{Simple and Robust Token Auctions.}
Motivated by the challenges in modeling agents' preferences over generated distributions, we take a robust design approach aiming for token auctions that  provide desirable incentive properties, while imposing minimal assumptions on the agents' preferences over distributions.

Specifically, we model agents' preferences as entailing partial orders over distributions. Based on this partial preference order\footnote{Partial orders are more general than total orders, and hence our key results (such as Lemma~\ref{lemma:consistent}, Lemma~\ref{lemma:full-order}, and Theorem~\ref{thm:monotone}) apply to any complete preference order model.},
we formulate two desirable incentive properties, which we consider minimal requirements:
\begin{itemize}
\item \emph{Payment monotonicity}: Given two different bids by the same agent, a final distribution is closer to the desired distribution if and only if the payment is higher.
\item \emph{Consistent aggregation:} If for two different bids of the same agent, the final distribution is closer to the preferred distribution for some bids of the other agents, then it should be so for all bids of the other agents.
\end{itemize}

We show that any mechanism with these two properties is \emph{strategically equivalent} to a mechanism that satisfies a monotonicity requirement on the distribution aggregation function.

We then investigate whether it is possible to equip such distribution aggregation functions with payment rules that satisfy additional incentive properties. Specifically, we investigate whether such aggregation rules admit an analogue of the \emph{second-price payment rule}.
In the single-item second-price (or Vickrey) auction \citep{vickrey1961counterspeculation}, the payment
corresponds to the critical bid where an agent transitions from
losing to winning.
To port this notion to our setting, we show that under robust preferences (see Definition~\ref{def:obvious-prefer}) any monotone aggregation rule can be written
as a distribution over deterministic allocations from bids to tokens
such that there is a critical bid where the allocation transitions from a
less preferred to a more preferred token. Such a critical bid
then serves as a natural candidate for a payment rule. This  hence  leads to an analogue of the second-price auction  for our token auction model that only requires ordinal preferences. The resulting class of auctions is applicable whenever the agent valuations are compatible with the partial order, and provides robust incentives for all of these.

\paragraph{Designing Aggregation Functions.}

We then move to designing concrete aggregation functions. Our approach considers \alfs inspired by state-of-the-art LLM training, and   derives optimal distribution aggregation functions that minimizes such \alfs.

We focus on specific forms of \alfs based on KL-divergence, a commonly used loss function in LLMs. We consider two natural formulations inspired by current LLM training,  and show that the corresponding optimal aggregation rules are the weighted (log-space) convex combination of the target distributions from all participants.

The linear and log-linear aggregation rules we identify have different pros and cons. Both share the advantage that they are
optimal
for the respective \alfs.
The linear rule turns out to be monotone with respect to robust preferences, and is therefore compatible with the robust incentives approach. However, the log-linear rule is not.

\paragraph{Demonstration.} 
We conclude with demonstrations to support our token auction formulation, obtained by prompt-tuning of a publicly available LLM. A two-advertiser demonstrative example is considered, under both the linear and log-linear aggregation rules.
We show how the combined output varies as a function of $\lambda = \nicefrac{b_1}{(b_1+b_2)}$, where $b_1$ and $b_2$ are the advertisers' bids. Both approaches lead to meaningful and interpretable texts that smoothly transition from favoring one to favoring another advertiser, with a joint ad produced for intermediate values of $\lambda$.

\paragraph{Discussion/Design Choices.} An alternative to our approach of designing an \emph{indirect mechanism} would be to aim for a \emph{direct mechanism}. Such a mechanism, instead of asking agents for a scalar bid along with query access to the agents' LLMs,
would elicit the agents' full preferences directly.
However, this appears unrealistic in our new domain due to multiple reasons: (1) Allocation outcomes in our setting are a high-dimensional distribution,
whereas a classic mechanism's allocation is typically a subset of items, and often a single item in tractable setups. (2) While it is reasonable in the classic setup to elicit a valuation for an item or a subset of items, it does not appear realistic to elicit a high-dimensional utility function over all possible token distributions.
(3) Eliciting full preferences over any token distribution would require solving a problem that is strictly harder than what current LLMs are trained to do (namely, merely output the most preferred distribution). This level of complexity might go beyond current technological capabilities and would likely be computationally inefficient.

\subsection{Additional Related Work} 
To the best of our knowledge, the exact research question and our approaches in this work have not been previously studied. However, our work is indeed connected to a few lines of research.

\paragraph{Related LLM Research.}

Our work 
shares some similarities with the literature on fine-tuning LLMs,  with reinforcement learning from human feedback (RLHF) as a representative approach \citep{wei2021finetuned,bakker2022fine,ouyang2022training,bai2022training}. At a high level, fine-tuning and RLHF seek to align a generally pre-trained LLM with certain desirable behaviors. This is in spirit analogous to our goal of designing LLMs to better align with a group of agents' overall preferences. However, our research challenges and methods are both different from those in the fine-tuning literature. Specifically, fine-tuning refines the underlying model's parameters whereas our approach is one-layer up and directly aggregates the token distributions from multiple models.  The main challenge we address is the potential incentive misalignment while eliciting LLM agents' preferences, whereas human labelers or other models that generate reward feedback for RLHF are assumed to be genuine and do not misrepresent their own preferences.

The literature on in-context learning \citep{brown2020language,wei2022chain,wei2023larger} is similar to us in the sense that this approach also does not change the model parameters. A main difference to our work is that this literature seeks to influence token distributions by conditioning on better-generated prefix contexts, whereas we directly aggregate distributions from multiple LLM agents.

\paragraph{Connections in Mechanism Design.}
Our work is related to the literature on (combinatorial) public projects  \citep{papadimitriou2008hardness,dughmi2011truthful}. The connection is that one can view the output of the aggregated LLM in our situation as a public project that benefits the agents to different degrees.
Similar to these earlier studies, a core challenge in our problem is to elicit preferences about the public project from unknown agents. However, the design problem in our case is fundamentally different --- we choose a high-dimensional distribution from an $\R^T$ space with only partial knowledge about agents' preferences, whereas previous work has focused on the problem of choosing from a discrete (often exponentially large) set with clear agent valuation functions \citep{papadimitriou2008hardness,dughmi2011truthful}.

Another related stream of work includes \citep{freeman2019truthful,goel2019knapsack}, which studies the problem of  truthfully aggregating budget proposals. Their mechanisms
output a distribution over budgets that best serves the population, just like our mechanisms output distributions over tokens. However, the objectives and techniques between our work and theirs are both different. First, their problem is mechanism design without money, whereas our problem has monetary transfers involved. A direct consequence of this first difference is that their mechanisms will treat every participant with equal weight, whereas the weights of our participants are determined by their bids. Second, the research on truthful budget proposal aggregations typically assumes explicit valuation functions (e.g., $l_1$ distance between preferred and output distributions), under which the VCG mechanism is truthful. Their main research question hence is to study additional properties of the mechanisms such as Pareto-efficiency and certain fairness properties \citep{freeman2019truthful}. Assuming such an explicit valuation function does not appear realistic in our problem, so our core research question is to design robust mechanisms that enjoy good incentive properties simultaneously for a broad range of valuation functions.

From this perspective, our work also bears some similarity to the rich literature on robust mechanism design. Most of this literature still assume existence of value functions with uncertainty modeled by Bayesian beliefs or in a max-min sense \citep{bergemann2005robust,bergemann2012robust,roughgarden2016optimal,carroll2015robustness,DuttingRT19}. However, assuming such a valuation function over tokens or their distributions does not appear realistic in creatives generation, thus our model is more similar to a worst-case style consideration during which we only assume partial (``obvious'') preferences.

\paragraph{Follow-Up Work.}  Several papers follow-up on our work, by studying mechanism design problems for LLMs. \cite{DubeyEtAl24} consider bidders that bid for placement of their content within a summary generated by a large language model. \cite{SoumaliasEtAl24} design a truthful mechanism that generates several samples from a reference LLM, and incentivizes bidders to truthfully reveal their preferences. \cite{mordo2024sponsored} consider sponsored question answering, in which an organic answer to a search query is fused with an ad to create a sponsored answer, and advertisers bid on the sponsored answers.

\section{Preliminaries}

In this section, we first provide an abstraction of typical generative models and then introduce the basic formalism of the mechanism design problem we study.  For concreteness we adopt a terminology that suits the important LLM use case where the creative is text.

\subsection{Abstraction of Large Language Models}
{\em Large language models} (LLMs) \citep{brown2020language,lamnda,palm2} can be abstracted as functions mapping from a partial sentence to the distribution of the next \emph{token} that extends the partial sentence.

Formally, let $T$ be the set of tokens and $\Delta(T)$ be the set of distributions over $T$. Let $T^* = T \cup T^2 \cup \cdots \cup T^K$ denote the set of sequences of tokens, where $K$ is the maximum sequence length that the LLM can handle. Each LLM is modeled as a function $f: T^* \rightarrow \Delta(T)$ that maps any sequence of tokens to a distribution over the next token.

\paragraph{Autoregressive Text Generation.}
A \emph{prompt} is an initial set of tokens $s_0 \in T^*$ provided with instructions of what text to generate. An LLM produces a text in response to the prompt by sampling a token $\tau_1 \sim f(s_0)$ and constructing $s_1 = s_0 \oplus \tau_1$ (where $\oplus$ is the operation to append a token to an array). We then repeat the process of $\tau_k \sim f(s_k)$ and $s_k = s_{k-1} \oplus \tau_k$ until a special \emph{end-of-sentence} token is sampled. If at some point the sequence of tokens becomes too long (larger than $K$) we trim $s_k$ to its length-$K$ suffix.

Note that LLMs are stateless: They lack internal memory beyond the generated token sequence, and each token is sampled independently.

\paragraph{Training of LLMs.~} An LLM $f$ is parameterized by a neural network structure $M$ and a set of weights $W$. The weights are often obtained by three stages of optimization (see first three rows in Table~\ref{tab:train_stages}). The initial stage is very computationally intensive but task independent. Subsequent stages are less costly and their goal is to adapt the general purposed model obtained in the first stage to more specific tasks. Each of the stages usually minimizes a different loss function over a different dataset. The details of the training process are not particularly relevant to our discussion, though  a more detailed discussion can be found in Section~\ref{sec:llm_training}. We will note, however, that some of the mechanisms we discuss for combining the inputs of different LLMs resemble the   functional forms used in the reinforcement learning and fine-tuning steps.

\begin{table*}[t]
    \centering
    {
    \tableonesize
    \begin{tabular}{|c|c|c|c|}
      \hline
      Training/Learning stages & Data & Cost & Goal  \\
      \hline
      $\text{Pre-training}^\ast$ 
      & General texts from web, books, etc & Very high & A common baseline shared across downstream tasks   \\
      \hline
      $\text{Instruction fine-tuning}^{\dagger}$ 
      & Task specific data & Medium & Optimize the behavior for specific tasks  \\
      \hline
      $\text{RLHF}^{\ddag}$
      & Human evaluations & Medium & Security control, reducing harmful behavior, etc  \\
      \hline
      In-context few-shot learning & Carefully designed prompts as inputs & Very low & Effectively influence the behavior in real-time  \\
      \hline
    \multicolumn{4}{c}{}\\[-4pt]
    \multicolumn{4}{c}{\scriptsize{$\ast$~\cite{brown2020language,palm2}~ $\dagger$~\cite{wei2021finetuned}~ $\ddag$~\cite{ouyang2022training}}}
    \end{tabular}
    }

    \caption{Common Training Stages of LLMs.}

    \label{tab:train_stages}
\end{table*}
\rollbackfootnotecounter

\subsection{Token Auctions for LLMs}

We now formalize the mechanism design problem of combining the output of different LLM-represented algorithmic agents. As discussed in the introduction, we will design an auction to act on the token-by-token generation stage. Our goal is to keep the auction technically aligned with the state-of-the-art LLM systems.

\paragraph{Robust Modeling of LLM Agents' Preferences.~}
A key challenge in designing mechanisms for LLM agents is comparing their ``utilities'' over different output distributions.
To illustrate, suppose an LLM agent's preferred distribution over two tokens is $p=(0.6, 0.4)$, and consider two possible generated distribution outcomes:  $q_1 = (0.5, 0.5)$ and $q_2 = (0.8, 0.2)$. Between $q_1$ and $q_2$, it is unclear which one this LLM agent would prefer since while $q_2$ appears more distant from $p$ than $q_1$, it has a higher probability on the first token which appears more preferably by the LLM's initial distribution $p$.

Despite this incomparability between $q_1$ and $q_2$, it does appear clear that $q_2$ would be less preferred by the LLM than $q_3 = (0.7, 0.3)$. This is because $q_3$ deviates from $p$ along the same directions as $q_2$ for each entry (i.e., both increase or both decrease), but deviates less in terms of the absolute value of deviation.

The above observation illustrates that while it is difficult to model LLM agents' complete preferences over all the generated distributions, it seems natural to assume certain \emph{partial order} over the distributions. This motivates us to consider a \emph{robust} modeling of LLM agents' preferences through general partial orders, and sometimes, the following specific notion of \emph{robust preferences}.
\begin{definition}[Robust Preferences over Distributions]\label{def:obvious-prefer}
Consider any LLM agent $i$ with preferred distribution $p_i$, and any two aggregation distribution $q, q' \in \Delta(T)$. We say $q$ is (weakly) \emph{robustly preferred} over $q'$ by agent $i$, or formally, $q \succeq_i q'$, if
\begin{eqnarray}
    \forall t \in T, &   \, \,  \, \, \, \,    |q(t) - p_i(t)| \leq |q'(t) - p_i(t)| \\
   & \text{and }  \, \, \, \,        (q(t) - p_i(t))(q'(t) - p_i(t)) \geq 0.
\end{eqnarray}
Moreover, if $q \neq q'$, then $q$ is strictly preferred over $q'$ by $i$,  i.e., $q \succ_i q'$.
\end{definition}
In other words, $q$ is robustly preferred by $i$ over $q'$ when (1) the deviation  of $q$ from $p_i$ is smaller than the deviation of $q'$ from $p_i$ for every entry; and (2) these deviations are along the same direction for every entry. Note that Definition \ref{def:obvious-prefer} only specifies a  partial ordering among aggregated distributions.  Thus it is possible that two distributions are not comparable, i.e.,   $q \not\succeq_i q'$ and $q' \not\succeq_i q$.

\paragraph{Token Auctions.} Our goal is to design simple, practical auction mechanisms that work well under minimal assumptions about the agents' private preferences. Specifically, we seek to design \emph{token auction mechanisms} $\mathcal{M} = \langle q, z \rangle$, where $q$ is a distribution aggregation function and $z$ is a payment function. A token auction mechanism operates on a token-by-token basis, and lets $n$ algorithmic LLM agents influence the output distribution and payments through scalar bids. The bid profile of all agents is denoted as $\b = (b_1, \ldots, b_n) \in \Bids^n_+$ where any bid $b_i \in \Bids_+$ is a non-negative rational number.
We assume that the initial prompt $s_0 \in T^*$, and the text aggregation functions $f_1, \ldots, f_n$ of the $n$ LLM agents are publicly known.

\emph{Distribution Aggregation Function.}
This is the first ingredient to a token auction mechanism.
A distribution aggregation function $q$ takes as input a vector of bids $\b \in \Bids^n_+$ and $n$ distributions $\p \in \Delta(T)^n$ and maps these to a distribution over tokens:
\[\text{aggregation function:} \quad  q: \Bids^n_+ \times \Delta(T)^n \rightarrow \Delta(T).\]

For fixed bids, a distribution aggregation function can be used in the same way as a text aggregation function. Namely, starting from the initial prompt $s_0 \in T^*$, we can repeatedly sample $\tau_k$ from distribution $q_k = q((b_1,\ldots,b_n),(f_1(s_{k-1}),\ldots, f_n(s_{k-1})))$  for each $k \geq 1$ to generate $s_k = s_{k-1} \oplus \tau_k$.
Note the alignment with LLMs, which already produce the  distributions $f_i(s_{k-1})$ for $i \in [n]$. No additional calls to the LLMs are needed.

\emph{Payment Function.} In addition to the distribution aggregation function, we seek to design payment functions. Here, we want to operate on a token-by-token basis and seek a stage-independent design akin to the stage independence of state-of-the-art LLMs' token generation.
Formally, for each agent $i$, we aim to define a
\begin{gather*}\text{pricing function:}\quad \zeta_i: \Bids^n_+ \times \Delta(T)^n \times T \rightarrow \R,
\end{gather*}
with the interpretation that for bids $\b \in \Bids^n_+$, distributions $\p \in \Delta(T)^n$, and token $t \sim q(\b,\p)$, the payment from agent $i$ is
$\zeta_i(\b, \p, t)$.
These pricing functions naturally lead to expected payments by taking expectations over tokens. Namely, for each agent $i$, we define
\begin{gather*}
\text{payment function:}\quad z_i: \Bids^n_+ \times \Delta(T)^n \rightarrow \R,
\end{gather*}
as the function that takes as input a vector of bids $\b \in \Bids^n_+$ and distributions $\p \in \Delta(T)^n$, and maps these to $z_i(\b,\p) = \E_{t \sim q(\b,\p)}[ \zeta_i(\b,\p,t)]$.

\paragraph{Discussion.}

Token auctions provide a suitable abstraction for analyzing the strategic aspects of LLM aggregation. Fully expressing agent preferences over all generated content is impractical. Representing agents as LLMs is a plausible approach, as LLMs distill preferences into token distributions. Thus, auctioning tokens based on LLM-expressed preferences is a natural mechanism.

At the same time, the detailed functioning of LLMs remains rather opaque, and it seems implausible that agents could meaningfully misreport the outcome distributions of their LLMs in order to achieve a more desirable aggregated output.  Our auction formulation offers a middle ground. We assume the designer has access to the LLMs, 
but let the agents influence the aggregation process through a single dimensional bid.

\section{Incentives in Token Auctions}\label{sec:detail-free}

In this section, we examine the strategic properties of token auctions. Our goal is robust incentive properties that rely on as few assumptions about the agents' preferences as possible.
We first formulate two natural properties that any reasonable mechanism should satisfy, and show that they are essentially equivalent to a monotonicity requirement on the distribution aggregation function. We then show that, when agents have robust preferences (see Definition~\ref{def:obvious-prefer}), for any such monotone distribution aggregation function, it is possible to define a natural second-price payment rule.

\subsection{Desirable Incentive Properties}
We begin by formulating two conditions that reasonable token auction mechanisms should satisfy with respect to general partial orders $\succeq_i$.
The first is a \emph{monotonicity} condition on the payment function. It requires that agents' pay increases if and only if they obtain better distributions.

\begin{definition}[Payment Monotonicity]\label{def:mono-pay}
Mechanism $\M = \langle q, z \rangle$ satisfies \emph{payment monotonicity},
if for all $\p$, $\b_{-i}, b_i, b'_i$
we have
\[
z_i(b_i, \b_{-i}, \p) \geq z_i(b'_i, \b_{-i}, \p)
\iff
 q(b_i, \b_{-i}, \p)  \succeq_i  q(b'_i, \b_{-i}, \p).
\]
\end{definition}

It is a natural incentive constraint because if the payment function is not monotone, then bidders will likely manipulate their bids in order to induce better distribution with lower payment. The following is a useful implication of payment monotonicity. Given $\p, \b_{-i}$, if $i$'s two bids $b_i, b'_i$ lead to the same aggregated distribution, i.e., $q(b_i, \b_{-i}, \p)  =  q(b'_i, \b_{-i}, \p)$, then as a consequence of payment monotonicity the payment must also be the same.

The second incentive constraint 
is about the consistency of the aggregation function.
Intuitively, the relative preference between two bids should not be influenced by the bids of other agents.

\begin{definition}[Consistent Aggregation]\label{def:consistent-aggregate}
The distribution aggregation function $q(\b, \p)$ is said to be \emph{consistent} if it admits consistent ordering across all $\b_{-i}$. Formally, 
if $q(b_i, \b_{-i}, \p)   \succ_i  q(b'_i, \b_{-i}, \p)$ for some $\b_{-i}$, then for all $\b'_{-i}$, $q(b_i, \b'_{-i}, \p) \succeq_i q(b'_i, \b'_{-i}, \p)$.
\end{definition}
Similar to payment monotonicity, this requirement of consistent aggregation is imposed to avoid bidders' concerns that the same bid can lead to better or worse distributions, depending on the opponents' bids.
To rule out such behavior, consistency requires that whenever bids $b_i,b'_i$ are such that $b_i$ is strictly preferred over $b'_i$ for some opposing bids $\b_{-i}$, then it better be that $b_i$ is weakly preferred over $b'_{i}$ for all opposing bids $\b'_{-i}$.

\subsection{Monotone Aggregation Functions}\label{sec:incentive:mono}

Next we show a ``revelation principle'' type of result, stating that if one is interested in mechanisms satisfying the desirable incentive properties stated above (Definition~\ref{def:mono-pay} and Definition~\ref{def:consistent-aggregate}), then one can without loss of generality focus on \emph{monotone aggregation functions} as captured in the following definition for any partial order $\succeq_i$.

\begin{definition}[Monotone Aggregation Function] \label{def:mon-agg}
The distribution aggregation function $q(\b, \p)$ is called \emph{monotone} if any higher bid from any agent $i$ leads to a more preferred aggregated distribution for $i$. Formally, for all $\p$, $\b_{-i}$ and $b_i \geq b'_i$:
$$q(b_i, \b_{-i}, \p) \succeq_i q(b'_i, \b_{-i}, \p).$$
\end{definition}

We are now ready to state our main finding in this subsection with the following definition of \emph{strategic equivalence} from one mechanism $\M$ to another $\tilde \M$.  In words, the aggregated distribution outcome and all agents' payments will be the same under mechanism $\M$ and $\tilde \M$ after each agent $i$ applies some strategy mapping $\pi_i$. Formally,
\begin{definition}[Strategic Equivalence]
A mechanism $\M = \langle q, z \rangle$ is \emph{strategically equivalent} to another mechanism $\tilde \M = \langle \tilde q , \tilde z \rangle$, if $\forall \p \in \Delta(T)^n$, there exists a profile $\pi$ of strategy mappings with $\pi_i: \Bids_+  \to \Bids_+$ for every agent $i$ (i.e., $\pi(\b) = (\pi_1(b_1), \ldots, \pi_n(b_n))$), such that $\forall \b \in \Bids_+^n,~q(\b , \p) = \tilde  q(\pi(\b), \p)$ and $z(\b , \p) =  \tilde z (\pi(\b), \p).$
\end{definition}

\begin{theorem}[Revelation Principle]\label{thm:monotone}
Any mechanism $\M = \langle q, z \rangle$  with a consistent distribution aggregation function $q$ and a monotone payment function $z$ is strategically equivalent to a mechanism $\tilde \M = \langle \tilde q , \tilde z \rangle$ which has a monotone distribution aggregation function $\tilde q$ and a monotone payment function $\tilde z$.
\end{theorem}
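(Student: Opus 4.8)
The plan is to construct, for each agent $i$ and each distribution profile $\p$, a strategy mapping $\pi_i$ that ``sorts'' the bid axis according to the preference order induced by the original aggregation function $q$, and then to define $\tilde q$ and $\tilde z$ by composing the originals with $\pi^{-1}$ (suitably interpreted). The key observation is that \emph{consistency} of $q$ ensures the preference order that $q$ induces on agent $i$'s bids does not depend on $\b_{-i}$ or (once we fix $\p$) on anything else, so there is a single well-defined partial order $\preceq^{(i)}_{\p}$ on $\Bids_+$ given by $b'_i \preceq^{(i)}_{\p} b_i \iff q(b_i,\b_{-i},\p)\succeq_i q(b'_i,\b_{-i},\p)$ for some (equivalently, by consistency, all) $\b_{-i}$. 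First I would verify this order is well-defined and that bids yielding the same aggregated distribution form equivalence classes on which, by payment monotonicity, $z_i$ is also constant.

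Next I would relabel bids so that this induced order becomes compatible with the natural order on $\Bids_+$. Concretely, I would like a nondecreasing map $\pi_i : \Bids_+ \to \Bids_+$ such that $\pi_i(b_i) \ge \pi_i(b'_i)$ whenever $b_i \succ^{(i)}_{\p} b'_i$, and such that $\pi_i$ hits every bid in the relevant range; then setting $\tilde q(\c,\p) := q(\pi^{-1}(\c),\p)$ and $\tilde z(\c,\p) := z(\pi^{-1}(\c),\p)$ on the image and extending appropriately gives a monotone $\tilde q$: raising the $i$-th coordinate of $\c$ corresponds to moving to a $\succeq^{(i)}_{\p}$-higher original bid, hence a $\succeq_i$-preferred aggregated distribution. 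Monotonicity of $\tilde z$ then follows from payment monotonicity of $z$ transported through the relabeling, since $\tilde z_i$ is higher exactly when the aggregated distribution is $\succeq_i$-preferred, which is exactly when the $\tilde q$ output improved. Finally I would check that $\pi = (\pi_1,\dots,\pi_n)$ satisfies the strategic-equivalence identities $q(\b,\p) = \tilde q(\pi(\b),\p)$ and $z(\b,\p) = \tilde z(\pi(\b),\p)$ by construction.

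The main obstacle is the construction of $\pi_i$ itself, for two reasons. First, the induced order $\succeq^{(i)}_{\p}$ is only a \emph{partial} order on $\Bids_+$ (the underlying preference $\succeq_i$ is partial), so ``sorting'' bids into a monotone relabeling requires extending it to a total order and then realizing that total order by a monotone self-map of $\Bids_+$ — one must be careful that incomparable bids can be placed in any consistent way without violating either monotonicity direction of Definition~\ref{def:mono-pay}, and that the equivalence classes (bids giving identical aggregated distributions) are mapped consistently. Second, there are cardinality/density bookkeeping issues: $\Bids_+$ is countable and dense, so one needs that the order-embedding exists into $\Bids_+$ (which it does, since any countable total order embeds into $\Bids$), and one must ensure $\pi_i$ is defined on \emph{all} of $\Bids_+$ and that $\pi^{-1}$ used in defining $\tilde q,\tilde z$ is handled on bids outside the image (e.g.\ by rounding down to the nearest image point, which preserves monotonicity). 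I would handle the partial-order issue by invoking a linear extension compatible with consistency and with the ``equal-distribution'' equivalence, and the density issue by the standard fact that countable linear orders embed order-isomorphically into the rationals; the rest is then routine verification of the two displayed identities.
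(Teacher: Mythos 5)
There is a genuine gap, and it sits exactly where you flag your ``main obstacle.'' Your plan for incomparable bids --- extend the induced partial order $\succeq^{(i)}_{\p}$ by an arbitrary linear extension and then relabel --- cannot work. Monotonicity of $\tilde q$ (Definition~\ref{def:mon-agg}) is a statement about the \emph{output distributions} under the original partial order $\succeq_i$: if $\pi_i(b_i) > \pi_i(b'_i)$ then $\tilde q(\pi_i(b_i),\b_{-i},\p)$ must be weakly $\succeq_i$-preferred to $\tilde q(\pi_i(b'_i),\b_{-i},\p)$. Strategic equivalence forces these to equal $q(b_i,\b_{-i},\p)$ and $q(b'_i,\b_{-i},\p)$, so if those two distributions are genuinely $\succeq_i$-incomparable, no relabeling of the bid axis can make $\tilde q$ monotone --- a linear extension reorders bids but does not create comparability between distributions. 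What closes this hole is the hypothesis you under-use: payment monotonicity. Since payments are real numbers and hence totally ordered, and Definition~\ref{def:mono-pay} is an equivalence, for any $b_i,b'_i$ one may assume w.l.o.g.\ $z_i(b_i,\b_{-i},\p)\ge z_i(b'_i,\b_{-i},\p)$ and conclude $q(b_i,\b_{-i},\p)\succeq_i q(b'_i,\b_{-i},\p)$; thus $\succeq_i$ is already a \emph{total} order on the reachable set $Q(\b_{-i},\p)=\{q(b_i,\b_{-i},\p): b_i\in\Bids_+\}$, and incomparable reachable outcomes simply do not exist. This is the paper's Lemma~\ref{lemma:full-order}, and it must precede the relabeling step; you only invoke payment monotonicity for the (correct, but secondary) observation that $z_i$ is constant on equal-distribution classes.

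Once that step is in place, the rest of your outline matches the paper's argument: consistency makes the induced order on bids well-defined across $\b_{-i}$, totality on each $Q(\b_{-i},\p)$ makes it complete, countability of $\Bids_+$ lets you realize it by an order-preserving map into $\Bids_+$ (the paper's Lemma~\ref{lemma:debreu}; note you need strict inequality $\pi_i(b_i)>\pi_i(b'_i)$ for strictly preferred bids, or $\tilde q$ becomes ill-defined), define $\tilde q,\tilde z$ on the image, extend monotonically off the image, and transport payment monotonicity through the identities $q(\b,\p)=\tilde q(\pi(\b),\p)$, $z(\b,\p)=\tilde z(\pi(\b),\p)$.
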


Theorem~\ref{thm:monotone} can be viewed as a revelation principle in the sense that it simplifies the design choice of aggregation functions. Monotone aggregation functions are a strict subset of consistent aggregation functions since monotonicity directly implies a total order over possible aggregated distributions $Q(\b_{-i}, \p) = \{q(b_i, \b_{-i}, \p): b_i \in \Bids_+\}$, with the order naturally determined by the real-numbers' order on $i$'s bid $b_i$  hence this order will be consistent across different $\b_{-i}$ and $\p$. In this sense, one might think that consistency --- which does not impose any restriction when $q(b_i, \b_{-i}, \p)$ is not strictly preferred over $q(b'_i, \b_{-i}, \p)$ --- might be a significantly weaker requirement on  aggregation functions than monotonicity which requires a total and consistent order.
Theorem~\ref{thm:monotone} shows that this is not the case --- they are essentially the same as long as the natural incentive requirement of payment monotonicity is also imposed.

\medskip

The proof of Theorem~\ref{thm:monotone} hinges on the following two lemmas,  Lemma~\ref{lemma:full-order} and Lemma~\ref{lemma:consistent}. Together these two lemmas imply the existence of a strategy mapping, under which the resulting aggregation function becomes monotone. The proof of the theorem is completed by applying the same mapping to the payment function, and noting that this ensures payment monotonicity.
\hybridtext{We refer the readers to the \href{https://arxiv.org/abs/2310.10826}{full version} for the formal proofs}{We defer the formal proofs of these results to Appendix \ref{appendix:proof-mono}}.

\begin{lemma}\label{lemma:full-order}
  For any distribution aggregation function $q$, there exists a payment function $z$ such that mechanism $\M = \langle q, z \rangle$ is payment-monotone if and only if $\succeq_i$ establishes  a total order over $Q(\b_{-i}, \p) = \{q(b_i, \b_{-i}, \p): b_i \in \Bids_+\}$ for any fixed $\b_{-i}$ and $\p$.
\end{lemma}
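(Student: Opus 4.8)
\textbf{Proof plan for Lemma~\ref{lemma:full-order}.}

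The statement is an ``if and only if'' between an existence claim (some payment-monotone $z$ exists for the given $q$) and a structural claim on $q$ (for every fixed $\b_{-i},\p$, the set $Q(\b_{-i},\p)$ is totally ordered by $\succeq_i$). I would prove the two directions separately, and in both directions the key observation is that payment monotonicity, read along a fixed ``slice'' in which we vary only $b_i$, forces a correspondence between the real order on bids (pushed through $z_i$) and the $\succeq_i$-order on the induced distributions.

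For the ``only if'' direction: suppose such a $z$ exists. Fix $\b_{-i}$ and $\p$, and take any two elements of $Q(\b_{-i},\p)$, say $q(b_i,\b_{-i},\p)$ and $q(b'_i,\b_{-i},\p)$. Since $z_i(b_i,\b_{-i},\p)$ and $z_i(b'_i,\b_{-i},\p)$ are real numbers, one of them is $\ge$ the other; applying Definition~\ref{def:mono-pay} in the appropriate direction yields $q(b_i,\b_{-i},\p)\succeq_i q(b'_i,\b_{-i},\p)$ or the reverse. Hence any two elements of $Q(\b_{-i},\p)$ are $\succeq_i$-comparable. To conclude ``total order'' I also need antisymmetry and transitivity on this set: transitivity should come from the corresponding transitivity of the real-number comparisons on the $z_i$-values (chaining the biconditional), and antisymmetry (two distributions each weakly preferred to the other must be equal) from the first clause of Definition~\ref{def:obvious-prefer}, which forces $|q(t)-p_i(t)|=|q'(t)-p_i(t)|$ and equal signs of deviation for every $t$, hence $q=q'$. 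A subtlety to handle carefully is that a priori distinct bids can map to the same distribution; I would note this is harmless since the claim is about the \emph{set} $Q$, and the remark after Definition~\ref{def:mono-pay} already records that equal distributions force equal payments, so the induced map from $z_i$-values to distributions is well defined in the needed sense.

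For the ``if'' direction: suppose $\succeq_i$ totally orders each $Q(\b_{-i},\p)$. I want to construct a payment-monotone $z$. The idea is to define, for each fixed $\b_{-i},\p$, a real-valued function on the totally ordered set $Q(\b_{-i},\p)$ that is order-preserving, and then set $z_i(b_i,\b_{-i},\p)$ to its value at $q(b_i,\b_{-i},\p)$. Concretely one can let $z_i(b_i,\b_{-i},\p)$ be a monotone function of $b_i$ composed in a way consistent with the order --- e.g.\ since the order on $Q$ is total, pick any order-embedding $\phi$ of $(Q(\b_{-i},\p),\succeq_i)$ into $\R$ and set $z_i := \phi\circ q(\,\cdot\,,\b_{-i},\p)$; by construction $z_i(b_i,\b_{-i},\p)\ge z_i(b'_i,\b_{-i},\p)$ iff $q(b_i,\b_{-i},\p)\succeq_i q(b'_i,\b_{-i},\p)$, which is exactly payment monotonicity. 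One then defines the per-token pricing function $\zeta_i$ to be this (token-independent) value so that the expectation reproduces $z_i$. This direction's only real content is producing the order-embedding into $\R$.

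The main obstacle is precisely this last point: an arbitrary totally ordered set need not embed into $(\R,\le)$ (it must be order-separable / have countable coinitiality-type restrictions). So I would lean on the hypothesis that bids lie in $\Bids_+=\mathbb{Q}_+$, which is countable: a countable totally ordered set always admits an order-preserving injection into $\R$ (enumerate the bids $b^{(1)},b^{(2)},\dots$ and assign values greedily in dyadic rationals, or quote the standard fact). Since $q(\,\cdot\,,\b_{-i},\p)$ has countable domain $\Bids_+$, its image $Q(\b_{-i},\p)$ is countable, the order is total by assumption, and the embedding exists; this is where the countability of $\Bids_+$ is genuinely used, and I would flag it explicitly. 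Everything else --- antisymmetry from Definition~\ref{def:obvious-prefer}, the harmlessness of non-injectivity of $q$ in $b_i$, and passing between $\zeta_i$ and $z_i$ via expectation --- is routine.
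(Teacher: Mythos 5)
Your proposal matches the paper's proof in both structure and substance: the ``only if'' direction follows from comparing the two real payments and applying the payment-monotonicity biconditional, and the ``if'' direction builds $z_i$ by composing $q(\cdot,\b_{-i},\p)$ with an order-preserving real-valued representation of $\succeq_i$ on the countable set $Q(\b_{-i},\p)$ --- exactly the paper's auxiliary lemma, proved there by the same greedy enumeration over $\Bids_+$ that you sketch. Your extra remarks on antisymmetry and ties are harmless (the paper's representation lemma handles indifferences directly), so the argument is correct and essentially identical.
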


\begin{lemma}\label{lemma:consistent} Consider any consistent aggregation function $q$. Suppose $\succeq_i$ defines a total order over the aggregation set $Q(\b_{-i}, \p) = \{q(b_i, \b_{-i}, \p): b_i \in \Bids_+\}$ induced by agent $i$'s bid for any fixed $\b_{-i}$ and $\p$, then there exist a profile $\pi$ of strategy mappings such that $q(\b, \p) = \tilde q(\pi(\b), \p)$ for some monotone aggregation function $\tilde q(\cdot, \cdot)$.
\end{lemma}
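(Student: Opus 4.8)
The idea is that consistency (Definition~\ref{def:consistent-aggregate}) forces each agent's bids to be ordered in a way that does not depend on the opponents' bids or on $\p$, so that $\pi_i$ can be taken to be an order-preserving relabeling of $\Bids_+$ that aligns this order with the usual order on the rationals. \emph{Step 1 (a bid-level order).} Fix agent $i$, and for bids $b_i,b'_i$ write $b_i \succ^\star b'_i$ if $q(b_i,\b_{-i},\p)\succ_i q(b'_i,\b_{-i},\p)$ for \emph{some} $\b_{-i},\p$. The first thing I would establish is a trichotomy: for every pair $b_i,b'_i$, exactly one of $b_i\succ^\star b'_i$, $b'_i\succ^\star b_i$, or ``$q(b_i,\b_{-i},\p)=q(b'_i,\b_{-i},\p)$ for all $\b_{-i},\p$'' holds. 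Indeed, consistency upgrades a single strict comparison to the weak comparison $\succeq_i$ in \emph{every} profile; since $\succeq_i$ restricted to $Q(\b_{-i},\p)$ is a total order and a partial order is antisymmetric, $b_i\succ^\star b'_i$ and $b'_i\succ^\star b_i$ cannot coexist, and if neither holds then in every profile the two aggregated distributions are $\succeq_i$-comparable (totality) yet neither strictly dominates, hence are equal. A similar bookkeeping argument using transitivity and antisymmetry of $\succeq_i$ shows $\succ^\star$ is transitive. Writing $b_i\equiv b'_i$ when the two bids induce the same map $(\b_{-i},\p)\mapsto q(b_i,\b_{-i},\p)$, this says $\succ^\star$ is a strict \emph{total} order on the quotient $\Bids_+/{\equiv}$, one that is independent of $\b_{-i}$ and $\p$.

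\emph{Step 2 (relabeling).} Since $\Bids_+$ is countable, $\Bids_+/{\equiv}$ is a countable linearly ordered set, so by the standard fact that every countable linear order order-embeds into the rationals (hence into $\Bids_+$) there is an order embedding $\Bids_+/{\equiv}\hookrightarrow\Bids_+$. Composing it with the quotient map yields $\pi_i:\Bids_+\to\Bids_+$ with $b_i\succ^\star b'_i\Rightarrow\pi_i(b_i)>\pi_i(b'_i)$, and with $\pi_i(b_i)=\pi_i(b'_i)$ only when $b_i\equiv b'_i$. Set $\pi(\b)=(\pi_1(b_1),\dots,\pi_n(b_n))$.

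\emph{Step 3 (defining $\tilde q$ and checking monotonicity).} On the range of $\pi$, set $\tilde q(\pi(\b),\p):=q(\b,\p)$; this is well defined because if $\pi(\b)=\pi(\b')$ then $\b$ and $\b'$ differ only by replacing coordinates by $\equiv$-equivalent bids, which by definition of $\equiv$ leaves $q(\cdot,\p)$ unchanged (swap one coordinate at a time). Then $q(\b,\p)=\tilde q(\pi(\b),\p)$ holds by construction. For bids lying in the ranges of the $\pi_j$, monotonicity of $\tilde q$ in coordinate $i$ is immediate from Step~1's trichotomy: if $\pi_i(b_i)\ge\pi_i(b'_i)$ then either $b_i\equiv b'_i$, so the aggregated distributions coincide, or $b_i\succ^\star b'_i$, so $q(b_i,\b_{-i},\p)\succeq_i q(b'_i,\b_{-i},\p)$ in every profile by consistency. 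Finally I would extend $\tilde q$ to all of $\Bids^n_+\times\Delta(T)^n$ one coordinate at a time, sending each off-range bid of agent $j$ to an element of $\mathrm{range}(\pi_j)$ by ``rounding down'' (to the supremum of the part of the range below it, with a fixed convention at the bottom); this keeps $\tilde q$ monotone in each coordinate and leaves it unchanged on $\mathrm{range}(\pi)$, so by Definition~\ref{def:mon-agg} we obtain the desired monotone aggregation function.

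\textbf{Main obstacle.} The conceptual core, Step~1, is short; the delicate part is the extension in Step~3. Because $\pi$ need not be surjective, $\tilde q$ must be defined (and shown monotone) off its natural domain, and the rounding must be arranged so it is well defined even when $\mathrm{range}(\pi_j)$ fails to attain the relevant suprema — which may require choosing the embedding in Step~2 so that $\mathrm{range}(\pi_j)$ is order-closed enough (e.g.\ an initial segment, or dense without a top element). I would also double-check the corner cases of Step~1 when $Q(\b_{-i},\p)$ has repeated elements and when two bids are never strictly comparable.
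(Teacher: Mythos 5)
Your Steps 2 and 3 essentially retrace the paper's argument (a countable linear order embeds into $\Bids_+$, define $\tilde q$ on the range of $\pi$ and extend monotonically), but Step 1 contains a genuine gap: you define the bid-level order $\succ^\star$ by quantifying over \emph{both} $\b_{-i}$ and $\p$, and justify its asymmetry/trichotomy by saying that ``consistency upgrades a single strict comparison to the weak comparison in every profile.'' Consistency (Definition~\ref{def:consistent-aggregate}) does no such thing across different $\p$: it fixes $\p$ and only transfers the comparison across opponent bid profiles $\b_{-i}$. Likewise the total-order hypothesis of the lemma is per fixed $(\b_{-i},\p)$, and the preference $\succeq_i$ itself is parameterized by $p_i$, a coordinate of $\p$, so comparisons made at different $\p$ are made under different preference relations. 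Consequently nothing rules out a consistent $q$ with $q(b_i,\b_{-i},\p)\succ_i q(b'_i,\b_{-i},\p)$ for some $\p$ while $q(b'_i,\b'_{-i},\p')\succ_i q(b_i,\b'_{-i},\p')$ for another $\p'$; your relation $\succ^\star$ is then not antisymmetric, and in fact no $\p$-independent relabeling can work for such a $q$: whichever of $\pi_i(b_i)\ge\pi_i(b'_i)$ or $\pi_i(b'_i)\ge\pi_i(b_i)$ holds, monotonicity of $\tilde q$ together with $q(\b,\p)=\tilde q(\pi(\b),\p)$ contradicts one of the two strict comparisons. So the claim at the end of Step 1 that the order is ``independent of $\b_{-i}$ \emph{and} $\p$'' overreaches and is the point where the proof breaks.

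The repair is exactly what the paper does, and it is available to you because the definition of strategic equivalence quantifies ``for all $\p$ there exists $\pi$,'' i.e.\ the strategy mappings are allowed to depend on $\p$: fix $\p$, define $b_i\succ_{i,\p}b'_i$ only via ``$q(b_i,\b_{-i},\p)\succ_i q(b'_i,\b_{-i},\p)$ for some $\b_{-i}$'' (this is precisely the scope in which consistency gives well-definedness) and $b_i\sim_{i,\p}b'_i$ when the two bids give the same aggregation for all $\b_{-i}$; the per-profile total order then makes $\succeq_{i,\p}$ complete, and Lemma~\ref{lemma:debreu} yields the representing map $f_{i,\p}:\Bids_+\to\Bids_+$, which you take as $\pi_i$. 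With that change your Steps 2--3 go through as in the paper; the extension of $\tilde q$ off the range of $\pi$ (your ``main obstacle'') is also present in the paper's proof, which handles it by extending monotonically using compactness of $\Delta(T)$, so your rounding construction would need the same kind of care but is not the substantive issue.
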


We conclude this subsection by giving two natural examples used in today's machine learning practice: an example of a monotone aggregation function and a non-monotone one, both with respect to robust preferences (see Definition~\ref{def:obvious-prefer}).

\begin{example}[Linear Aggregation] Consider  $q_\mathsf{KL}(\b, \p)$ defined as $$\textstyle q_{\mathsf{KL}} = \frac1B\sum_{i \in [n]} b_i \cdot p_i,~\text{where}~B =  \sum_{i \in [n]} b_i.$$
It is easy to verify that this is a monotone aggregation function with respect to robust preferences.

\end{example}

\begin{example}[Log-linear Aggregation]\label{exmpl:log-linear}
Consider the aggregation function $\bar q_{\mathsf{KL}}(\b, \p)$ defined by the following equations:   \[\textstyle \forall t \in T,~\ln \bar q_{\mathsf{KL}}(t) = \frac1B \sum_{i \in [n]} b_i \cdot \ln p_i(t) - C,\]
where $B =  \sum_{i \in [n]} b_i$ and $C =  \ln \sum_{t \in T} e^{\frac1B \sum_{i \in [n]} b_i \cdot \ln p_i(t)}$.

The following two-agent example shows that $\bar q_{\mathsf{KL}}$ is not monotone: $p_1 = (.5, .4, .1)$ and $p_2 = (.5, .1, .4)$. When $b_1 = b_2$, $\bar q_{\mathsf{KL}} = (\sqrt{.25}, \sqrt{.04}, \sqrt{.04}) / .9 = (5 / 9, 2 / 9, 2 / 9)$. Fix $b_2 = 1$, either $b_1 \rightarrow 0$ or $b_1 \rightarrow \infty$, $\bar q_{\mathsf{KL}}(t_1) = .5 < 5 / 9$. Hence $\bar q_{\mathsf{KL}}$ is not monotone with respect to robust preferences.
\end{example}

\subsection{Second Price Payment Rules}\label{sec:spa}

Next, we investigate whether monotone aggregation functions admit a ``second-price'' payment rule, capturing the Vickrey auction's concept of the ``minimum bid to win''. In the Vickrey auction, the payment corresponds to the critical bid where an agent transitions from losing to winning. To port this notion to our setting, we will show in Theorem~\ref{thm:stable} that, under robust preferences, any monotone aggregation rule can be written as a distribution over deterministic allocations from bids to tokens such that there is a critical bid where the allocation transitions from a less preferred to a more preferred token. Such a critical bid becomes then a natural candidate for the payment. We will show that besides being payment monotone it has other desirable properties, such as a Myerson-style characterization \citep{myerson1981optimal} in terms of the total variation distance between the preferred distribution and the outcome.
To define our payment rule, we first introduce the notion of stable sampling.

\subsubsection{Stable Sampling}\label{subsec:stable}

To design the payment for bidder $i$, we analyze a monotone distribution aggregation function $q(\b, \p)$ from $i$'s perspective, fixing the distributions $\p$ and the bids of other agents $\b_{-i}$. To simplify the notation, we refer to $q(b_i, \b_{-i}, \p)$ as $q(b_i)$.

We define an implementation of $q(\cdot)$ as a function $\sigma$ that maps $b_i$ and an exogenous random variable $r \sim \mathcal{R}$ (independent of $\b$ and $\p$) to a token $t \in T$. Here, $r$ is the \emph{random source} used for implementing the random sampling of token $t$. Hence when $r$ is fixed, $\sigma(b_i, r)$ is fully deterministic. We say that $\sigma: \Bids_+ \times \mathcal{R} \rightarrow T$ is a valid implementation of $q(b_i)$ if:
\[\textstyle \Pr_{r \sim R}[\sigma(b_i, r) = t] = q_t(b_i), \forall t \in T.\]

Next we define what it means for an implementation to be a ``stable'' sampling. It is useful to partition the token set $T$ into $\Tunder = \{t \in T: q_t(0) \leq (p_i)_t\}$ and $\Tover = \{t \in T: q_t(0) > (p_i)_t\}$ corresponding to undersampled ($\Tunder$) and oversampled ($\Tover$) tokens.
The monotonicity of aggregation function $q$ turns out to be equivalent to the monotonicity of $q_t(b_i)$, as formalized in the following lemma (proof in Appendix~\ref{app:proofs-spa}).

\begin{lemma}\label{lemma:monotone-alter}
Under agents' robust preferences, a  distribution aggregation function $q$ is monotone, if and only if for every agent $i$ and $b_i \in \R_+$,
\begin{enumerate}
\item $\forall t \in \Tunder$, $q_t(b_i) \leq (p_i)_t$ and $q_t$ weakly increases in $b_i$;
\item $\forall t \in \Tover$, $q_t(b_i) \geq (p_i)_t$ and $q_t$ weakly decreases in $b_i$.
\end{enumerate}
\end{lemma}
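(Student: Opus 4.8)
The plan is to prove Lemma~\ref{lemma:monotone-alter} by unwinding the definition of robust preferences (Definition~\ref{def:obvious-prefer}) against a fixed reference distribution, namely $q(0) = q(0, \b_{-i}, \p)$, which serves as the ``baseline'' aggregated distribution when agent $i$ bids zero. The key observation is that the partition $T = \Tunder \cup \Tover$ is defined relative to this baseline: $\Tunder$ collects tokens that are at or below $(p_i)_t$ at $b_i = 0$, and $\Tover$ collects tokens strictly above. Since for each token $t$ the value $q_t(b_i)$ is a scalar moving in $[0,1]$, robust preference between two output distributions $q(b_i)$ and $q(b_i')$ reduces, coordinate by coordinate, to two conditions: (i) the signed deviations $q_t(b_i) - (p_i)_t$ and $q_t(b_i') - (p_i)_t$ have the same sign (no ``crossing'' of the target value), and (ii) the one with smaller absolute deviation is the preferred one.

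For the ``if'' direction I would assume conditions (1) and (2) hold and show monotonicity directly. Take $b_i \ge b_i'$. For $t \in \Tunder$, condition (1) says $q_t$ stays $\le (p_i)_t$ throughout and is weakly increasing, so $q_t(b_i)$ and $q_t(b_i')$ are both on the same side of $(p_i)_t$ (the ``below'' side), giving the same-sign condition, and $(p_i)_t - q_t(b_i) \le (p_i)_t - q_t(b_i')$ gives the smaller-absolute-deviation condition. Symmetrically for $t \in \Tover$ using the weakly decreasing part of condition (2). Hence $q(b_i) \succeq_i q(b_i')$ for every coordinate simultaneously, which is exactly $q(b_i) \succeq_i q(b_i')$, i.e.\ $q$ is monotone.

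For the ``only if'' direction I would assume $q$ is monotone and derive (1) and (2). First I pin down the sign structure: applying monotonicity with $b_i' = 0$ gives $q(b_i) \succeq_i q(0)$ for all $b_i \ge 0$, and the same-direction clause of Definition~\ref{def:obvious-prefer} forces $(q_t(b_i) - (p_i)_t)(q_t(0) - (p_i)_t) \ge 0$ for every $t$. For $t \in \Tunder$ (where $q_t(0) - (p_i)_t \le 0$), if $q_t(0) < (p_i)_t$ this immediately yields $q_t(b_i) \le (p_i)_t$; the boundary case $q_t(0) = (p_i)_t$ needs a short separate argument (e.g.\ comparing intermediate bids, or noting that if $q_t$ ever strictly exceeds $(p_i)_t$ one can find bids violating the absolute-deviation clause). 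Then, with all $q_t(b_i)$ confined below $(p_i)_t$ on $\Tunder$, monotonicity between $b_i \ge b_i'$ gives $|q_t(b_i) - (p_i)_t| \le |q_t(b_i') - (p_i)_t|$, i.e.\ $q_t(b_i) \ge q_t(b_i')$, which is the weak monotonicity in condition (1); condition (2) follows symmetrically on $\Tover$.

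The main obstacle I anticipate is the boundary tokens where $q_t(0) = (p_i)_t$ exactly (so $t \in \Tunder$ by the $\le$ convention in the lemma), and more subtly, handling the ``only if'' direction carefully: monotonicity is a statement about the \emph{joint} (all-coordinates-at-once) partial order $\succeq_i$, and I need to extract \emph{per-coordinate} conclusions from it. This is legitimate because $q(b_i) \succeq_i q(b_i')$ by definition requires \emph{both} clauses to hold for \emph{every} $t$ simultaneously, so each coordinate inequality is available; but I should state this decomposition explicitly rather than gloss over it. A second small care-point is that the reference distribution in Definition~\ref{def:obvious-prefer} is $p_i$, whereas monotonicity compares $q(b_i)$ to $q(b_i')$, so I must always route comparisons through $p_i$ and use the no-crossing clause to ensure the absolute-value inequalities translate into signed (and hence genuinely monotone) inequalities on $q_t$.
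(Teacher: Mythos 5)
Your proposal is correct and follows essentially the same route as the paper's own proof: both directions compare $q(b_i)$ with $q(b_i')$ and with the baseline $q(0)$ via the two clauses of Definition~\ref{def:obvious-prefer}, extracting the per-coordinate sign and magnitude information exactly as you describe. The boundary case $q_t(0) = (p_i)_t$ that you flag needs no separate argument, since the absolute-deviation clause applied to $q(b_i) \succeq_i q(0)$ immediately forces $q_t(b_i) = (p_i)_t$ for such tokens.
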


We can now define the key notion of stable sampling, and state and prove our main result below.

\begin{definition}[Stable Sampling]\label{def:stable-sampler}
  Let $q_i(b_i)$ be an aggregation function obtained by fixing $\b_{-i}$ and $\p$, and let $\Tunder$ and $\Tover$ be the sets of undersampled and oversampled tokens for agent $i$. Then we say that an implementation $\sigma$ is stable with respect to aggregation function $q$ if for any $r \in \mathcal{R}$ there are two tokens $u_r \in \Tunder$ and $o_r \in \Tover$ and a threshold $\theta_r \in \Bids_+ \cup \{\infty\}$ such that:
  \[
    \sigma(b_i,r) = o_r,~\text{if}~b_i < \theta_r;~\text{and}~\sigma(b_i,r) = u_r,~\text{if}~b_i \geq \theta_r.
  \]
\end{definition}

Here is an illustrative example of stable sampling for a simple distribution supported on two tokens:  one undersampled token $u$ with probability $q_u(b_i)$ and an overampled token $o$ with probability $q_o(b_i) = 1 - q_u(b_i)$. Monotonicity of $q$ implies  $q_u(b_i)$  weakly increases in $b_i$ and $q_o(b_i)$  weakly decreases. Sample $r$ uniformly from $[0,1]$. For any $b_i$, if $  q_o(b_i) > r $ assign $o$; assign $u$ otherwise. Note that this implementation is: (i) deterministic for fixed $r$; (ii) transits from token $o$ to $u$ as $b_i$ increases (hence $q_o(b_i)$ decreases) under any fixed $r$; (iii) matches the probabilities of $q(b_i)$ in expectation for any $b_i$. The following theorem generalizes this to any number of tokens (proof in Appendix~\ref{app:proofs-spa}).

\begin{theorem}\label{thm:stable} Given a monotone distribution aggregation function $q$ then for any agent $i$ with robust preferences and fixed $\b_{-i}$ and $\p$ there always exists a stable implementation $\sigma$ of $q$.
\end{theorem}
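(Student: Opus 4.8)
The plan is to take the random source $\mathcal R=[0,1]$ with Lebesgue measure and to exhibit, for each $r$, a trajectory $b_i\mapsto\sigma(b_i,r)$ of the ``one-jump'' shape of Definition~\ref{def:stable-sampler} whose law at every bid equals $q(b_i):=q(b_i,\b_{-i},\p)$. The cases $\Tunder=\emptyset$ or $\Tover=\emptyset$ force $q(b_i)\equiv p_i$ and are trivial, so I assume both sets nonempty. By Lemma~\ref{lemma:monotone-alter}, for $u\in\Tunder$ the map $b_i\mapsto q_u(b_i)$ is weakly increasing and stays $\le (p_i)_u$, while for $o\in\Tover$ the map $b_i\mapsto q_o(b_i)$ is weakly decreasing and stays $\ge (p_i)_o$. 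Since $\Bids_+$ is countable, monotonicity gives well-defined limits $q_t(\infty):=\lim_{b_i\to\infty}q_t(b_i)$, and I would extend each $q_t(\cdot)$ monotonically to $\R_{\ge0}$ once and for all so that its Lebesgue--Stieltjes measure is available.

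Next I would split $[0,1]$ into three regions: a \emph{permanent} part for undersampled tokens, of length $\sum_{u\in\Tunder}q_u(0)$, carved into pieces of length $q_u(0)$ on which $\sigma(\cdot,r)\equiv u$ (degenerate threshold $\theta_r=0$); a \emph{permanent} part for oversampled tokens, of length $\sum_{o\in\Tover}q_o(\infty)$, carved into pieces of length $q_o(\infty)$ on which $\sigma(\cdot,r)\equiv o$ (degenerate threshold $\theta_r=\infty$); and a \emph{floating} part $F$ of length $L:=1-\sum_{u}q_u(0)-\sum_{o}q_o(\infty)\ge 0$. Writing $g_o(b_i):=q_o(b_i)-q_o(\infty)$ (weakly decreasing, tending to $0$) and $h_u(b_i):=q_u(b_i)-q_u(0)$ (weakly increasing, starting at $0$), the constraint $\sum_{t\in T}q_t(b_i)=1$ is equivalent to $\sum_{o}g_o(b_i)+\sum_{u}h_u(b_i)=L$ for \emph{every} $b_i$. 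Hence it suffices to arrange, on $F$, that at ``time'' $b_i$ exactly $g_o(b_i)$ of its mass sits at each $o$ and $h_u(b_i)$ at each $u$; adding the permanent parts then recovers $q_o(b_i)$ and $q_u(b_i)$ at all tokens and bids, i.e.\ a valid implementation.

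The core step is to realize those floating masses with one-jump trajectories, i.e.\ to assign each $r\in F$ a source $o_r\in\Tover$, a destination $u_r\in\Tunder$ and a threshold $\theta_r\in[0,\infty]$ with $\sigma(b_i,r)=o_r$ below $\theta_r$ and $=u_r$ at and above $\theta_r$. I would view $-dg_o$, the Stieltjes measure of the decreasing function $g_o$ (total mass $g_o(0)$), as the law of the transition times of mass leaving $o$, and $dh_u$ (total mass $h_u(\infty)$) as the law of the arrival times of mass entering $u$. The crucial identity is that these share a common time-marginal: since $\sum_{o}g_o(b_i)=L-\sum_{u}h_u(b_i)$, the Stieltjes measures satisfy $\sum_{o}(-dg_o)=\sum_{u}dh_u=:\mu$, a nonnegative measure of total mass $L$. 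Disintegrating along $\mu$, let $\alpha(\theta)$ be the conditional distribution on $\Tover$ (the Radon--Nikodym derivative of $-dg_o$ with respect to $\mu$) and $\beta(\theta)$ the conditional distribution on $\Tunder$; I would then let the joint law of $(o_r,\theta_r,u_r)$ be ``draw $\theta_r\sim\mu$, then draw $o_r\sim\alpha(\theta_r)$ and $u_r\sim\beta(\theta_r)$ independently,'' realized as a pushforward of Lebesgue measure on $F$. Checking the marginals finishes the proof: the $F$-mass at $o$ with $\theta_r>b_i$ is $(-dg_o)((b_i,\infty])=g_o(b_i)$, and the $F$-mass at $u$ with $\theta_r\le b_i$ is $dh_u((0,b_i])=h_u(b_i)$, so $\sigma$ is a valid implementation, and every trajectory has the required one-jump form.

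I expect the coupling to be the main obstacle: the whole construction hinges on matching the mass \emph{leaving} oversampled tokens against the mass \emph{entering} undersampled tokens \emph{at the same bid value}, which is exactly what the common-marginal identity $\sum_{o}(-dg_o)=\sum_{u}dh_u$ --- itself just a restatement of $\sum_{t}q_t\equiv1$ --- together with a standard disintegration provides. A handful of points deserve only a remark rather than real work: extending monotone functions a priori given on $\Bids_+$ to $\R_{\ge0}$ before forming Stieltjes measures; allowing $\theta_r$ to range over the extended reals $[0,\infty]$, which is all the step structure of Definition~\ref{def:stable-sampler} really needs (only the induced cut of $\Bids_+$ matters for bid inputs); and noting that mass genuinely remaining at $o$ for all bids is precisely $q_o(\infty)$ and is absorbed by the permanent-oversampled region. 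A measure-theory-free alternative would first prove the statement on any finite bid grid $0=\beta_0<\cdots<\beta_N$ by greedily pairing, between consecutive grid points, the mass that leaves oversampled tokens with the equal amount entering undersampled tokens, and then pass to all of $\Bids_+$ by a diagonal/compactness argument on the finite-dimensional laws of $\sigma$.
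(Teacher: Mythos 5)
Your proposal is correct and follows essentially the same route as the paper's proof: both split the randomness into a permanent undersampled block of mass $Q_+(0)=\sum_{u\in\Tunder}q_u(0)$, a permanent oversampled block of mass $\sum_{o\in\Tover}q_o(\infty)$, and a floating block of mass $Q_+(\infty)-Q_+(0)$ on which departures from $\Tover$ are coupled to arrivals in $\Tunder$ through the common monotone ``amount transferred'' clock, with one-jump trajectories at a threshold and the token identities drawn from the conditional (disintegrated) distributions at that threshold. Your Lebesgue--Stieltjes/disintegration formulation is just a measure-theoretic rendering of the paper's reparameterization by $Q_+$ (with generalized inverse) and almost-everywhere differentiation of $\hat q_t$, and if anything it handles atoms of the transfer measure a bit more gracefully.
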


\subsubsection{A Second Price Rule via Stable Sampling}

A stable implementation of a monotone aggregation rule naturally leads to a second-price payment rule: for any given randomness $r$, if the oversampled token $o_r$ is sampled, the agent pays zero as it is the same token that would have been sampled under randomness $r$ if their bid was zero. If the agent's bid was high enough to move from the oversampled $o_r$ to the undersampled token $u_r$, then the agent pays the critical bid $\theta_r$.

Perhaps surprisingly, the expected payment induced by the second price rule described above does not depend on the actual implementation of the sampling algorithm so long as it is stable in the sense of Definition \ref{def:stable-sampler}. The expected payment is proportional to the extent to which the sampling shifts the distribution $q(b_i)$ towards the desired distribution $p_i$ as $b_i$ increases and,   interestingly, admits an expression simillar to the standard Myersonian payment as a function of allocations \citep{myerson1981optimal} in which the allocation probability is replaced by the total variation distance between the agent's preferred distribution $p_i$ and the implemented distribution $q(b_i)$.

\begin{proposition}
Consider an arbitrary stable sampling implementation of $q(b_i)$ and the induced second price rule using the critical bid $\theta_r$ as described above. Then the expected payment satisfies     $$ z_i(b_i) =
\frac{1}{2} \int_0^{b_i} \big[ \Vert q(b_i) - p_i\Vert_1 - \Vert q(b') - p_i\Vert_1 \big] \mathrm{d}b', \qquad \forall b_i \in \Bids_+.  $$
\end{proposition}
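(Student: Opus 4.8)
The plan is to compute the expected payment directly from the definition of the second-price rule attached to a stable implementation, and then rewrite the result in terms of total variation using Lemma~\ref{lemma:monotone-alter}. Fix agent $i$, the opposing bids $\b_{-i}$, the distributions $\p$, and abbreviate $q(b') = q(b', \b_{-i}, \p)$. By Theorem~\ref{thm:stable} a stable implementation $\sigma$ exists; for each realization $r$ of the random source it is described by a pair $(u_r, o_r) \in \Tunder \times \Tover$ and a threshold $\theta_r \in \Bids_+ \cup \{\infty\}$, with $\sigma(b', r) = o_r$ for $b' < \theta_r$ and $\sigma(b', r) = u_r$ for $b' \geq \theta_r$. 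Under the induced second-price rule the realized payment equals $\theta_r$ exactly when $b_i \geq \theta_r$ (the bid was high enough to cross from $o_r$ to $u_r$) and $0$ otherwise, so
\[
  z_i(b_i) \;=\; \E_{r}\big[\, \theta_r \cdot \mathbf{1}[\theta_r \leq b_i] \,\big].
\]

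Next I would apply the elementary identity $\theta \cdot \mathbf{1}[\theta \leq b_i] = b_i\,\mathbf{1}[\theta \leq b_i] - \int_0^{b_i} \mathbf{1}[\theta \leq b']\,\mathrm{d}b'$, valid for every $\theta \in [0,\infty]$ (the degenerate cases $\theta = 0$ and $\theta = \infty$ are checked by hand). Taking expectations over $r$ and exchanging expectation with the integral by Tonelli (everything is nonnegative) gives $z_i(b_i) = b_i \Pr_r[\theta_r \leq b_i] - \int_0^{b_i} \Pr_r[\theta_r \leq b']\,\mathrm{d}b'$. The key observation is then that, by the definition of stable sampling, $\{\theta_r \leq b'\} = \{\sigma(b',r) = u_r\} = \{\sigma(b',r) \in \Tunder\}$ (using that $\Tunder$ and $\Tover$ partition $T$); and since $\sigma$ is a valid implementation of $q$, $\Pr_r[\sigma(b',r) = t] = q_t(b')$, hence
\[
  \Pr_r[\theta_r \leq b'] \;=\; \sum_{t \in \Tunder} q_t(b') \;=:\; Q_+(b').
\]
In particular $Q_+(b')$ depends only on $q$, not on the chosen stable implementation — this is exactly the implementation-independence remarked on in the text. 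So far we have $z_i(b_i) = \int_0^{b_i}\big[Q_+(b_i) - Q_+(b')\big]\,\mathrm{d}b'$.

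It remains to turn $Q_+$ into a total variation distance, and this is where Lemma~\ref{lemma:monotone-alter} is used: since $q$ is monotone, under robust preferences $q_t(b') \leq (p_i)_t$ for every $t \in \Tunder$ and $q_t(b') \geq (p_i)_t$ for every $t \in \Tover$. Hence $|q_t(b') - (p_i)_t| = (p_i)_t - q_t(b')$ on $\Tunder$ and $= q_t(b') - (p_i)_t$ on $\Tover$; summing over all $t$ and using $\sum_t (p_i)_t = \sum_t q_t(b') = 1$ gives $\Vert q(b') - p_i \Vert_1 = 2\big(P_+ - Q_+(b')\big)$, where $P_+ := \sum_{t \in \Tunder}(p_i)_t$, i.e.\ $Q_+(b') = P_+ - \tfrac12 \Vert q(b') - p_i \Vert_1$. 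Substituting, the $P_+$ terms cancel and $Q_+(b_i) - Q_+(b') = \tfrac12\big[\Vert q(b') - p_i\Vert_1 - \Vert q(b_i) - p_i\Vert_1\big]$, which plugged into the previous display yields the stated formula. (Lemma~\ref{lemma:monotone-alter} also shows $\Vert q(\cdot) - p_i \Vert_1$ is weakly decreasing, so the integrand is nonnegative and $z_i \geq 0$; I would accordingly write the bracket as $\Vert q(b') - p_i\Vert_1 - \Vert q(b_i) - p_i\Vert_1$, which I take to be the intended orientation of the difference in the statement.)

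I do not anticipate a real obstacle here: the argument is short once Theorem~\ref{thm:stable} and Lemma~\ref{lemma:monotone-alter} are in hand. The only points requiring care are the boundary cases $\theta_r \in \{0, \infty\}$ in the layer-cake identity, the Tonelli exchange, and keeping the signs straight in the $L^1$/total-variation bookkeeping; the conceptual heart is the identification $\Pr_r[\theta_r \leq b'] = \sum_{t \in \Tunder} q_t(b')$, which simultaneously delivers the formula and its independence of the particular stable implementation.
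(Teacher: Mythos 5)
Your proof is correct and follows essentially the same route as the paper's: a layer-cake identity on the critical bid $\theta_r$, the identification $\Pr_r[\theta_r \leq b'] = \sum_{t\in\Tunder} q_t(b')$ from stability plus validity of the implementation, and Lemma~\ref{lemma:monotone-alter} to rewrite the undersampled mass as $P_+ - \tfrac12\Vert q(b')-p_i\Vert_1$; the paper merely organizes the same computation per token, via $\zeta_i(b_i,t)q_t(b_i)=\E_r[\theta_r\mathbf{1}\{\sigma(b_i,r)=t\}]=\int_0^{b_i}[q_t(b_i)-q_t(b')]\,\mathrm{d}b'$, before summing over $\Tunder$. Your sign observation is also right: the intermediate expression $\int_0^{b_i}\bigl[\sum_{t\in\Tunder}q_t(b_i)-\sum_{t\in\Tunder}q_t(b')\bigr]\mathrm{d}b'$, which both you and the paper reach, equals $\tfrac12\int_0^{b_i}\bigl[\Vert q(b')-p_i\Vert_1-\Vert q(b_i)-p_i\Vert_1\bigr]\mathrm{d}b'$ (a nonnegative quantity), so the bracket in the proposition's display should indeed be read with the two norms in the order you give.
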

\begin{proof}
We again omit the terms $\b_{-i}, \p$ since they are fixed in each context. If a token $t \in \Tover$ is sampled, the payment is naturally $\zeta_i(b_i, t)=0$. For a token $t \in \Tunder$ we have:
\begin{align*} & \zeta_i(b_i, t) q_t(b_i) = \textstyle \E_r[\theta_r \cdot \mathbf{1}\{\sigma(b_i,r)=t\}] \\ = & \textstyle \E_r \int_0^{b_i} \big[ \mathbf{1}\{\sigma(b_i,r)=t\} - \mathbf{1}\{\sigma(b',r)=t\} \big]  \mathrm{d}b' \\
=& \textstyle \int_0^{b_i} [q_t(b_i) - q_t(b')] \mathrm{d}b'.
\end{align*}
Hence:
\begin{align*}  z_i(b_i) & \textstyle = \sum_t   \zeta_i(b_i, t) q_t(b_i)  \textstyle = \int_0^{b_i} \bigg[ \sum_{t \in \Tunder} q_t(b_i) - \sum_{t \in \Tunder} q_t(b') \bigg] \mathrm{d}b'\\ & \textstyle =
\frac{1}{2} \int_0^{b_i} \big[ \Vert q(b_i) - p_i\Vert_1 - \Vert q(b') - p_i\Vert_1 \big] \mathrm{d}b',
\end{align*}
as claimed.
\end{proof}

\paragraph{Counterfactuals.} The practical advantage of a stable sampling implementation coupled with a second price rule is to offer advertisers a description where it is clear that they only pay if they can improve the outcome.
Moreover, with a fixed $r$, advertisers can easily evaluate counterfactuals, as each token can only have one of two possible outcomes, simplifying the comparison.

\paragraph{Universally Stable Sampling.}
We chose to define stable sampling as a single-agent algorithm with fixed $\b_{-i}$.
One may define a universal stable implementation as $\sigma^{\mathsf{univ}}:\Bids^n \times \mathcal{R} \rightarrow T$ such that
\begin{gather*}
  \textstyle
  \forall \b \in \Bids^n_+, t\in T,~\Pr_{r \sim R}[\sigma^{\mathsf{univ}}(\b, r) = t] = q_t,
\end{gather*}
and for any $i$ and $\b_{-i}$, $\sigma^{\mathsf{univ}}(\cdot, \b_{-i}, r)$ is stable. In Appendix~\ref{app:universally-stable},
we provide counter-examples where such $\sigma^{\mathsf{univ}}$ do not always exist.

\section{Design of Aggregation Functions}\label{sec:aggregation_functions}

In the previous section, we discussed payment schemes and incentive properties assuming we have an aggregation function. Here we investigate the design of aggregation functions. We adopt two guiding principles: (1) We first define \analf
to model the overall satisfaction of the agents with the final distribution $q$, weighted by their bids $b_i$. The \alf
has the form:
$$\textstyle \Loss(\p, \b, q) = \sum_i b_i \rho(p_i, q),$$
where $\rho : \Delta(T) \times \Delta(T) \rightarrow \R$ indicates how different the distribution $q$ is from the preferred $p_i$. (2) The second is to define the
function $\rho$ based on the typical loss functions in LLM training.

\subsection{Review of LLM Training}\label{sec:llm_training}

So far we have assumed that an LLM $f:T^* \rightarrow \Delta(T)$ is already trained. In order to discuss the training process, it is useful to recall that an LLM is a neural network parameterized by a vector of weights $W \in \mathbb{R}^N$ in a very high dimensional space. To discuss training, it is useful to think of $f$ as a function of both input and weights:
$$f:T^* \times \R^N \rightarrow \Delta(T).$$
We represent the second argument by a superscript $W$. Training is to optimize $W$ such that $f^W(\cdot)$ minimizes a certain loss function over a dataset. A dataset is a sequence of pairs $(x_i, y_i)$ with $x_i \in T^*$ (input sequence) and $y_i \in T$ (label), and a loss function is a function $\ell : T \times \Delta(T) \rightarrow \R$. A network typically seeks to minimize:
$$\textstyle \min_W \sum_i \ell(y_i, f^W(x_i)).$$
A widely used loss function in the first stage is the KL-divergence:
$$\ell_\mathsf{KL}(y, x) = - \ln [f^W(y \vert x)],$$
where we use the notation $f(y\vert x)$ to represent the probability that a token $y \in T$ is sampled from $f(x) \in \Delta(T).$

It is useful to think of this problem in the limit where the size of the dataset grows to infinity and it can be effectively thought of as a full-support distribution over $T^* \times T$. In that setting, we can represent the dataset as a distribution $\mu \in \Delta(T^* \times T)$ over input-label pairs $(x,y) \in T^* \times T$. We will write $\mu(x) = \sum_y \mu(x,y)$ to denote the marginal distribution on $x$ and $\mu(\cdot \vert x)$ to denote the conditional distribution of labels given an input $x$. Then
\begin{equation}\label{eq:KL_problem}\tag{KL}
\!\!\!\!\!\!\!\!\!\!\!\! \displaystyle{\min_W}  \textstyle \L_{\KL}^\mu(f^W), ~~~ \L_{\KL}^\mu(f) := {\sum_{x  \in T^*}} \mu(x) \cdot D_\mathsf{KL}(\mu(\cdot \vert x) \Vert f(x)),
\end{equation}
where
$D_{\mathsf{KL}}(p \| q) = \sum_t p(t) \ln \frac{p(t)}{q(t)}$
is the KL-divergence.

LLMs are typically trained through successive refinement of weights: $W^{\mathsf{pre}} \rightarrow W^{\mathsf{SFT}} \rightarrow W^{\mathsf{RL}}$. In pre-training we compute $W^{\mathsf{pre}}$ by solving problem \eqref{eq:KL_problem} on a generic dataset $\mu^{\mathsf{pre}}$ via stochastic gradient descent. In the second stage, we initialize the weights as $W = W^{\mathsf{pre}}$ and run stochastic gradient descent to solve the same problem \eqref{eq:KL_problem} on a more specialized dataset $\mu^{\mathsf{SFT}}$. In other words, we solve the same problem on two different datasets.

The problem in the RLHF stage is different. The dataset only contains $x$ (still represented by $\mu$) and for any $y$, we have a function $r(x,y)$ giving the reward of mapping $x$ to $y$. Then $W^{\mathsf{RL}}$ is obtained by maximizing reward while minimizing the distance to the function $f^{\mathsf{SFT}}$ from previous stages (the PPO algorithm \citep{schulman2017proximal,ouyang2022training}):
\begin{align}
& \max_W \L_\RL^{\mu,r}(f^W) \label{eq:RL_problem}\tag{RL}, \\
& \L_\RL^{\mu,r}(f) := \sum_{x  \in T^*} \mu(x) \Bigl[\sum_y r(x,y) f(y \vert x) - \beta D_\mathsf{KL}(f(x)\Vert f^{\mathsf{SFT}}(x)) \Bigr]. \nonumber
\end{align}

\paragraph{KL-divergence and Entropy.} For the next propositions it is useful to recall that the entropy of a distribution $p \in \Delta(T)$ is $H(p) = -\sum_{t\in T} p(t) \ln p(t)$. Given two distributions $p,q \in \Delta(T)$, the cross entropy of $q$ relative to $p$ is $H(p,q) = -\sum_{t\in T} p(t) \ln q(t)$. Hence we can write $D_{\mathsf{KL}}(p\|q) = H(p,q) - H(p)$. We will also use Gibbs' inequality $H(p) \leq H(p,q), \forall p,q \in \Delta(T)$.

\subsection{KL-inspired Aggregation}

The first aggregation method is based on the \eqref{eq:KL_problem} program. When trying to aggregate LLMs $f_i$ according to bids $b_i$, we will design a function $q$ that mimics the outcome of the following thought experiment. We will imagine that each LLM was obtained by solving the \eqref{eq:KL_problem} on a dataset represented by $\mu_i$, where the marginal over inputs are the same $\mu_i(x) = \mu(x), \forall i$ but potentially differ on the marginals on the labels $\mu_i(y \vert x)$. In this thought experiment, we will combine their LLMs by re-training an LLM on the combined labels weighted by the bids. In other words, we will solve the problem:
\begin{equation}\label{eq:kl_aggreg_1}
\textstyle \min_W \text{ }\L_{\KL}^{\bar \mu}(f^W), \quad \quad \text{where}~\bar \mu = \frac{\sum_i b_i \mu_i}{\sum_i b_i}.
\end{equation}

The next proposition morally says that we can obtain a  solution to the \eqref{eq:KL_problem} problem on the aggregated dataset \eqref{eq:kl_aggreg_1} by combining its solutions on individual datasets. The proof appears in Appendix~\ref{app:proofs-agg}.

\begin{proposition}\label{prop:kl_aggreg} Consider datasets $\mu_i$ such that $\mu_i(x) = \mu(x), \forall i,x$ and let $\bar \mu$ be their weighted average. Let $f_i$ be the minimizer of $\L_{\KL}^{\mu_i}$ and $f^*$ be the minimizer of $\L_{\KL}^{\bar \mu}$, the loss on the aggregated dataset. Then $f^*$ is the solution to:
\begin{equation}\label{eq:kl_aggreg_2}
\textstyle  \min_f \sum_x \mu(x) \sum_i b_i D_\mathsf{KL}(f_i(x) \Vert f(x)).
\end{equation}
\end{proposition}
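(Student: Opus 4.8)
The plan is to reduce the claim to a pointwise (per-input) optimization and then exploit the standard decomposition of KL-divergence through cross-entropy. First I would fix an input $x \in T^*$ and work with the conditional distributions. Since $\mu_i(x) = \mu(x)$ for all $i$, we have $\bar\mu(x) = \mu(x)$, and the conditional of the aggregated dataset is $\bar\mu(\cdot \mid x) = \frac{\sum_i b_i \mu_i(\cdot \mid x)}{\sum_i b_i}$. The loss $\L_{\KL}^{\bar\mu}$ decomposes as a $\mu(x)$-weighted sum of $D_{\mathsf{KL}}(\bar\mu(\cdot\mid x) \Vert f(x))$ over $x$, and each term is minimized independently by setting $f(x) = \bar\mu(\cdot \mid x)$; hence $f^*(x) = \bar\mu(\cdot \mid x)$. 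Likewise, since $f_i$ minimizes $\L_{\KL}^{\mu_i}$ and $\mu_i(x) = \mu(x)$ has full support, $f_i(x) = \mu_i(\cdot \mid x)$. So the content of the proposition is the algebraic identity that $\bar\mu(\cdot\mid x) = \frac{\sum_i b_i f_i(x)}{\sum_i b_i}$ is exactly the minimizer of $f \mapsto \sum_i b_i D_{\mathsf{KL}}(f_i(x) \Vert f(x))$ for each $x$.

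The key step is to establish that last fact. Writing $D_{\mathsf{KL}}(f_i(x) \Vert f(x)) = H(f_i(x), f(x)) - H(f_i(x))$, the second term is independent of $f$, so minimizing $\sum_i b_i D_{\mathsf{KL}}(f_i(x) \Vert f(x))$ over $f(x) \in \Delta(T)$ is equivalent to minimizing $\sum_i b_i H(f_i(x), f(x)) = -\sum_i b_i \sum_t f_i(x)(t) \ln f(x)(t) = -B \sum_t \bar\mu(t\mid x) \ln f(x)(t) = B \cdot H(\bar\mu(\cdot\mid x), f(x))$, where $B = \sum_i b_i$. By Gibbs' inequality, $H(p, f) \geq H(p)$ with equality iff $f = p$, so this is uniquely minimized at $f(x) = \bar\mu(\cdot\mid x)$. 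Summing the per-$x$ statements with weights $\mu(x)$ and using $\bar\mu(\cdot\mid x) = \frac{\sum_i b_i f_i(x)}{\sum_i b_i}$, we conclude $f^*$ solves \eqref{eq:kl_aggreg_2}, as claimed.

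I don't expect a serious obstacle here; the proposition is essentially a repackaging of the cross-entropy decomposition plus Gibbs' inequality. The only points requiring a little care are: (i) justifying the interchange of the outer $\sum_x \mu(x)$ with the inner minimization — valid because the objective is separable across $x$ and each summand is minimized by an unconstrained-in-the-other-coordinates choice of $f(x)$; (ii) the full-support / well-definedness caveat, namely that $f_i(x)$ must be the conditional $\mu_i(\cdot\mid x)$, which needs $\mu(x) > 0$ (the ``full-support distribution'' assumption recalled just before \eqref{eq:KL_problem}) so that $\L_{\KL}^{\mu_i}$ actually pins down $f_i(x)$; and (iii) noting that the objective in \eqref{eq:kl_aggreg_2} could be $+\infty$ if some $f(x)(t) = 0$ while $f_i(x)(t) > 0$, so the minimizer lies in the relative interior and Gibbs' inequality applies cleanly. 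None of these is more than a remark.
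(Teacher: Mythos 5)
Your proposal is correct and follows essentially the same route as the paper's proof: both rest on the cross-entropy decomposition of the KL-divergence, the linearity identity $\sum_i b_i H(f_i(x), f(x)) = B\,H\bigl(\bar\mu(\cdot\mid x), f(x)\bigr)$, and the observation that the unconstrained minimizer satisfies $f_i(x) = \mu_i(\cdot\mid x)$. The only difference is cosmetic: the paper stops once the two objectives are shown to differ by an $f$-independent constant (so their minimizers coincide without being computed), whereas you additionally invoke Gibbs' inequality to identify the common minimizer $\bar\mu(\cdot\mid x)$ explicitly --- essentially the argument the paper reserves for Lemma~\ref{lemma:efficient}.
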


Proposition \ref{prop:kl_aggreg} motivates the following \alf:
\[\textstyle \Loss_{\mathsf{KL}} = \sum_{i \in [n]} b_i \cdot \rho_{\mathsf{KL}}(p_i, q) = \sum_{i \in [n]} \sum_{t \in T} b_i \cdot p_i(t) \cdot \ln \frac{p_i(t)}{q(t)}.\]

Now, we characterize the aggregation function that minimizes $\Loss_{\mathsf{KL}}$. The proof in Appendix~\ref{app:proofs-agg} uses Gibb's inequality.

\begin{lemma}\label{lemma:efficient}
  The aggregation function that minimizes $\Loss_{\mathsf{KL}}$ is the linear combination of $p_i$:
  \[\textstyle \forall t \in T,~q_{\mathsf{KL}}(t) = \frac{\sum_{i} b_i \cdot p_i(t)}{\sum_i b_i}.\]
\end{lemma}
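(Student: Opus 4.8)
The plan is to directly minimize the convex objective $\Loss_{\mathsf{KL}}(\p,\b,q) = \sum_{i} \sum_{t} b_i p_i(t) \ln \frac{p_i(t)}{q(t)}$ over $q \in \Delta(T)$. First I would strip away the terms that do not depend on $q$: writing $\ln \frac{p_i(t)}{q(t)} = \ln p_i(t) - \ln q(t)$, the objective becomes $\sum_i b_i \sum_t p_i(t)\ln p_i(t) - \sum_t \big(\sum_i b_i p_i(t)\big) \ln q(t)$, so minimizing $\Loss_{\mathsf{KL}}$ in $q$ is equivalent to minimizing $- \sum_t \big(\sum_i b_i p_i(t)\big)\ln q(t)$, i.e.\ to minimizing the (scaled) cross entropy $H(\bar p, q)$ where $\bar p(t) = \frac{1}{B}\sum_i b_i p_i(t)$ and $B = \sum_i b_i$. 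Note $\bar p$ is a genuine probability distribution since it is a convex combination of the $p_i$.

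Next I would invoke Gibbs' inequality, stated in the excerpt: $H(\bar p) \le H(\bar p, q)$ for all $q \in \Delta(T)$, with equality iff $q = \bar p$. Hence $B\cdot H(\bar p,q)$ is minimized exactly at $q = \bar p = \frac{\sum_i b_i p_i}{\sum_i b_i}$, which is precisely the claimed $q_{\mathsf{KL}}$. Translating back, $\Loss_{\mathsf{KL}}$ is minimized at $q_{\mathsf{KL}}(t) = \frac{\sum_i b_i p_i(t)}{\sum_i b_i}$, and the minimum value is $\sum_i b_i \sum_t p_i(t)\ln p_i(t) + B\, H(\bar p) = \sum_i b_i D_{\mathsf{KL}}(p_i \Vert \bar p)$, though the value itself is not needed for the statement.

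One small technical point to address: if some $p_i(t) = 0$ the term $p_i(t)\ln p_i(t)$ is taken as $0$ by the usual convention, and if $\bar p(t) = 0$ for some $t$ (which forces $p_i(t) = 0$ for every $i$ with $b_i > 0$) that token contributes nothing and can be dropped from the support; restricting attention to $\{t : \bar p(t) > 0\}$ makes all logarithms finite and Gibbs' inequality applies verbatim on that sub-support. I do not expect any real obstacle here — the only thing to be careful about is making the reduction to cross entropy cleanly and citing Gibbs' inequality with its equality condition, which is exactly the tool the paper flagged as the one to use.
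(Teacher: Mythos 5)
Your proposal is correct and follows essentially the same route as the paper's own proof: both strip off the $q$-independent entropy terms, rewrite the $q$-dependent part as $B\cdot H\bigl(\tfrac{1}{B}\sum_i b_i p_i,\, q\bigr)$, and conclude via Gibbs' inequality (with its equality condition) that the minimizer is the bid-weighted convex combination $q_{\mathsf{KL}}$. Your brief handling of zero-probability tokens is a small extra care the paper omits, but it does not change the argument.
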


Besides being monotone and aligned with how LLMs are trained, this aggregation function has the advantage that we only need to call a single LLM to sample each token. We can choose an index $i$ proportionally to the bids and then sample a token from the $i$-th LLM. To compute second price payments, however, we still need to query the LLMs for all agents.

\subsection{RL-inspired Aggregation}

Now, consider a different thought experiment where all agents use the same pre-trained and fine-tuned model of weights $W^{\mathsf{SFT}}$, but employs a different reward function $r_i(x,y)$ for RLHF. We combine their LLMs by solving the \eqref{eq:RL_problem} problem on the combined reward functions weighted by the bids. We will solve the problem:
$$\textstyle \max_W \L_{\RL}^{\mu, \bar r}(f^W), \quad \bar r = \frac{\sum_i b_i r_i}{\sum_i b_i}.$$

Analogously to Proposition~\ref{prop:kl_aggreg}, we show that we can obtain the solution to the aggregated problem by combining the solutions on each dataset. We defer the proof of Proposition~\ref{prop:rl_aggreg} to Appendix~\ref{app:proofs-agg}.

\begin{proposition}\label{prop:rl_aggreg} Consider datasets $\mu, r_i$ and let $f^{\mathsf{SFT}}$ be the solution of program \eqref{eq:KL_problem} with data $\mu$. If $f_i$ is the maximizer of $\L_{\RL}^{\mu, r_i}$, let $f^*$ be the maximizer of $\L_{\RL}^{\mu, \bar r}$ where $\bar r$ is the weighted average of the reward functions, then $f^*$ is the function minimizing:
\begin{equation}\label{eq:prop_rl}
  \textstyle \min_f  \sum_x \mu(x) \sum_i b_i D_\mathsf{KL}(f(x) \Vert f_i(x)).
\end{equation}
\end{proposition}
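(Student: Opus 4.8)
The plan is to mirror the structure of the proof of Proposition~\ref{prop:kl_aggreg}, exploiting the fact that the $\eqref{eq:RL_problem}$ objective decomposes as a sum over inputs $x$, so it suffices to work at a fixed $x$ and argue pointwise. For a fixed $x$, write $q = f(x) \in \Delta(T)$ and $q_i = f_i(x)$, and recall that $q_i$ is the maximizer (over distributions on $T$) of $\sum_y r_i(x,y) q(y) - \beta D_\mathsf{KL}(q \Vert f^{\mathsf{SFT}}(x))$. The key observation is that this per-input RL objective is concave in $q$ and its maximizer admits a closed form: by a standard Gibbs/variational argument (differentiating with a Lagrange multiplier for the simplex constraint), the maximizer is the Gibbs-tilted distribution
\begin{equation*}
q_i(y) \;=\; \frac{f^{\mathsf{SFT}}(y\vert x)\, e^{r_i(x,y)/\beta}}{\sum_{y'} f^{\mathsf{SFT}}(y'\vert x)\, e^{r_i(x,y')/\beta}}.
\end{equation*}
The same formula with $\bar r = \frac{1}{B}\sum_i b_i r_i$ (where $B = \sum_i b_i$) in place of $r_i$ gives $f^*(x)$, the per-input maximizer of $\L_\RL^{\mu,\bar r}$.

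Next I would rewrite the per-input RL objective in terms of KL-divergence to $q_i$ rather than to $f^{\mathsf{SFT}}(x)$. Using the closed form above, for any $q \in \Delta(T)$ one has $\ln q_i(y) = \ln f^{\mathsf{SFT}}(y\vert x) + r_i(x,y)/\beta - \ln Z_i(x)$ where $Z_i(x)$ is the normalizing constant; substituting this into $-\beta D_\mathsf{KL}(q\Vert q_i) = \beta\sum_y q(y)[\ln q_i(y) - \ln q(y)]$ and comparing with $\sum_y r_i(x,y)q(y) - \beta D_\mathsf{KL}(q\Vert f^{\mathsf{SFT}}(x))$ shows that the two differ only by the additive constant $\beta \ln Z_i(x)$ (independent of $q$). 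Hence maximizing the per-input RL objective with reward $r_i$ is equivalent to minimizing $D_\mathsf{KL}(q \Vert q_i)$, and similarly maximizing with $\bar r$ is, up to a $q$-independent constant, equivalent to minimizing $D_\mathsf{KL}(q \Vert f^*(x))$. The crucial algebraic step is then to verify that the linearity of $\bar r$ in the $r_i$ translates into the additive identity
\begin{equation*}
\sum_y \bar r(x,y)\,q(y) - \beta D_\mathsf{KL}(q\Vert f^{\mathsf{SFT}}(x))
\;=\; \frac{1}{B}\sum_i b_i\Bigl[\sum_y r_i(x,y)\,q(y) - \beta D_\mathsf{KL}(q\Vert f^{\mathsf{SFT}}(x))\Bigr],
\end{equation*}
which holds because $\sum_i b_i/B = 1$, so the $D_\mathsf{KL}(q\Vert f^{\mathsf{SFT}}(x))$ term is a convex combination of copies of itself; combined with the previous equivalence this yields, up to $q$-independent constants, $-\beta D_\mathsf{KL}(q\Vert f^*(x)) = -\frac{\beta}{B}\sum_i b_i D_\mathsf{KL}(q\Vert q_i) + (\text{const})$. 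Maximizing the left side over $q$ gives $q = f^*(x)$ trivially (it is the unique point where the KL vanishes), and maximizing the right side is exactly minimizing $\sum_i b_i D_\mathsf{KL}(q\Vert f_i(x))$. Summing the per-input statements against the weights $\mu(x)$ gives the claimed program $\eqref{eq:prop_rl}$.

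I expect the main obstacle to be bookkeeping the $q$-independent constants carefully: the identification of the RL objective with a negative KL-divergence to the tilted distribution only holds up to additive terms involving the partition functions $Z_i(x)$ and $Z^*(x)$, and one must check that these do not interfere when passing from ``$f^*$ maximizes the RL objective'' to ``$f^*$ minimizes $\sum_i b_i D_\mathsf{KL}(f(x)\Vert f_i(x))$.'' Since these constants do not depend on the optimization variable $f$, they drop out of the $\arg\min$/$\arg\max$, but this needs to be stated cleanly. A secondary point is to justify the closed form of the per-input maximizer — this is routine convex optimization over the simplex (the objective is strictly concave in $q$ because $-D_\mathsf{KL}(\cdot\Vert f^{\mathsf{SFT}}(x))$ is strictly concave and the reward term is linear), so the first-order conditions with a single equality multiplier pin down the unique optimum — but it should be invoked explicitly rather than asserted. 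Everything else is the same linearity-of-the-average trick already used for Proposition~\ref{prop:kl_aggreg}.
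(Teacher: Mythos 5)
Your argument is correct, and it diverges from the paper's proof in how it handles the weighted-KL program \eqref{eq:prop_rl}. Both proofs start the same way: fix $x$ and use the first-order (KKT/Gibbs) conditions to get the tilted closed forms $f_i(y\vert x)\propto f^{\mathsf{SFT}}(y\vert x)e^{r_i(x,y)/\beta}$ and $f^*(y\vert x)\propto f^{\mathsf{SFT}}(y\vert x)e^{\bar r(x,y)/\beta}$. From there the paper applies KKT a second time, to the program \eqref{eq:prop_rl} itself, obtaining the log-linear form $f^\circ(y\vert x)\propto\exp\bigl(\tfrac1B\sum_i b_i\ln f_i(y\vert x)\bigr)$ and then matching the two closed forms by substitution. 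You instead never solve \eqref{eq:prop_rl} explicitly: you rewrite each per-input RL objective as $-\beta D_\mathsf{KL}(q\Vert f_i(x))$ plus a $q$-independent partition-function constant, use linearity of $\bar r$ to identify the $\bar r$-objective with the bid-weighted average of the $r_i$-objectives, and conclude that $\sum_i b_i D_\mathsf{KL}(q\Vert f_i(x))$ equals $B\,D_\mathsf{KL}(q\Vert f^*(x))$ up to an additive constant, so its unique minimizer is $f^*(x)$ by nonnegativity of KL. This is exactly the ``objectives differ by a constant'' device the paper uses for Proposition~\ref{prop:kl_aggreg}, transported to the RL case; it buys a uniqueness statement for free and avoids a second application of KKT, at the cost of not producing the explicit log-linear aggregation formula, which the paper's route yields as a byproduct and then reuses verbatim for Lemma~\ref{lemma:efficient-log}. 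The only points to state explicitly (shared implicitly with the paper) are full support of $f^{\mathsf{SFT}}(x)$, hence of each $f_i(x)$ and $f^*(x)$, so that the logarithms and KL terms are finite, and that pointwise minimization at each $x$ with $\mu(x)>0$ suffices because the objective is a nonnegatively weighted sum over $x$.
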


Similar to what we did in the previous subsection, we can also define \analf inspired by Proposition \ref{prop:rl_aggreg}. Namely:

$$\textstyle \overline{\Loss}_{\mathsf{KL}} = \sum_{i \in [n]} b_i \cdot \rho_{\mathsf{KL}}(q,p_i) = \sum_{i \in [n]} \sum_{t \in T} b_i \cdot q(t) \cdot \ln \frac{q(t)}{p_i(t)}.$$

\begin{lemma}\label{lemma:efficient-log}
    The aggregation function that minimizes $\overline{\Loss}_{\mathsf{KL}}$ is the log-linear combination of $p_i$:
    \[\textstyle \forall t \in T,~ \ln \bar q_{\mathsf{KL}}(t) = C + \frac{\sum_i b_i \ln \cdot p_i(t)}{\sum_i b_i},\]
    where $C$ is a normalization constant such that $\sum_t \bar q_{\mathsf{KL}}(t) =1$.
\end{lemma}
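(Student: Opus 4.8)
The plan is to set up the minimization of $\overline{\Loss}_{\mathsf{KL}}$ as a constrained optimization over the simplex $\Delta(T)$ and solve it with Lagrange multipliers, exploiting the convexity of the objective in $q$. First I would write $\overline{\Loss}_{\mathsf{KL}}(q) = \sum_i b_i \sum_t q(t)\ln\frac{q(t)}{p_i(t)} = \sum_t q(t)\big[\big(\sum_i b_i\big)\ln q(t) - \sum_i b_i \ln p_i(t)\big]$, which after dividing by $B=\sum_i b_i$ (a positive constant that does not affect the minimizer) is $B\sum_t q(t)\big[\ln q(t) - \frac{1}{B}\sum_i b_i\ln p_i(t)\big]$. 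Introducing the auxiliary distribution $\tilde q(t) \propto \exp\big(\frac{1}{B}\sum_i b_i\ln p_i(t)\big)$ — i.e.\ exactly the claimed $\bar q_{\mathsf{KL}}$ with $C$ the log-normalizer — one recognizes this expression as $B\cdot D_{\mathsf{KL}}(q \Vert \tilde q)$ plus a constant in $q$ (the constant being $-B\cdot(\text{log-normalizer})$ times the total mass of $q$, which is $1$).

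Having rewritten the objective as a positive multiple of $D_{\mathsf{KL}}(q\Vert \tilde q)$ up to an additive constant, the conclusion is immediate from Gibbs' inequality (invoked in the paper as $H(p)\le H(p,q)$, equivalently $D_{\mathsf{KL}}(p\Vert q)\ge 0$ with equality iff $p=q$): the KL-divergence is nonnegative and vanishes exactly when $q = \tilde q$, so $\tilde q = \bar q_{\mathsf{KL}}$ is the unique minimizer. This is the same proof strategy used for Lemma~\ref{lemma:efficient} (the excerpt notes that proof ``uses Gibb's inequality''), just with the arguments of the KL-divergence swapped, so the two proofs are structurally parallel. One should check that $\tilde q$ is genuinely a probability distribution — it has nonnegative entries and sums to $1$ by construction of $C$ — and that the expressions are well-defined, which requires $p_i(t)>0$ for all $i,t$ (or at least that the supports align); this is the usual tacit full-support assumption on the LLM output distributions, consistent with softmax outputs, and I would state it rather than belabor it.

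The only mild subtlety, and the step I expect to require the most care in writing, is the bookkeeping around the additive and multiplicative constants: one must verify that the term one discards when passing from $\overline{\Loss}_{\mathsf{KL}}(q)$ to $B\cdot D_{\mathsf{KL}}(q\Vert\tilde q)$ really is independent of $q$ over the feasible set (it is, because $\sum_t q(t)=1$ is forced), so that the two problems have the same argmin. An alternative, fully equivalent route is to take the gradient of $\overline{\Loss}_{\mathsf{KL}}(q) + \gamma(\sum_t q(t)-1)$ with respect to $q(t)$, obtaining $B(\ln q(t)+1) - \sum_i b_i\ln p_i(t) + \gamma = 0$, hence $\ln q(t) = \frac{1}{B}\sum_i b_i\ln p_i(t) - 1 - \gamma/B$, which is exactly the claimed form with $C = -1-\gamma/B$ pinned down by normalization; convexity of $q\mapsto q\ln q$ guarantees this stationary point is the global minimum. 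I would present the Gibbs'-inequality version as the main argument since it is cleaner and matches the companion lemma, and perhaps mention the Lagrangian computation as the way the constant $C$ is identified concretely.
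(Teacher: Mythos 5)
Your argument is correct, but it is not the route the paper takes: the paper proves Lemma~\ref{lemma:efficient-log} simply by pointing to the proof of Proposition~\ref{prop:rl_aggreg}, where the log-linear form is extracted from KKT stationarity of the Lagrangian for $\min_f \sum_i b_i D_{\mathsf{KL}}(f\Vert f_i)$ over the simplex (i.e., the computation you relegate to your ``alternative route''). Your main argument instead mirrors the paper's proof of Lemma~\ref{lemma:efficient} with the KL arguments swapped: writing $\overline{\Loss}_{\mathsf{KL}}(q)=\sum_t q(t)\bigl[B\ln q(t)-\sum_i b_i\ln p_i(t)\bigr]=B\,D_{\mathsf{KL}}(q\Vert \bar q_{\mathsf{KL}})-B\ln Z$, where $Z$ is the normalizer and the discarded term is constant on the feasible set because $\sum_t q(t)=1$, and then invoking Gibbs' inequality. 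Your constant bookkeeping checks out, and this decomposition buys you something the paper's citation does not make explicit: global optimality and uniqueness of the minimizer in one step, with no separate appeal to convexity or KKT sufficiency, and a presentation structurally parallel to the companion lemma. What the paper's route buys is economy --- the stationarity computation is already written out for Proposition~\ref{prop:rl_aggreg}, so the lemma comes for free. Your remark on the tacit full-support assumption ($p_i(t)>0$, so that $\ln p_i(t)$ is finite) is apt and applies equally to the paper's statement; the only other degenerate case, $\sum_i b_i=0$, is immaterial.
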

The proof follows directly from the proof of Proposition~\ref{prop:rl_aggreg}.
While not monotone (Example~\ref{exmpl:log-linear}),
the log-linear aggregation function is a reasonable choice assuming that the agents' preferences are aligned with the KL-divergence loss for the RL-stage training.

\hybridtext{
\begin{table*}[ht]
  \centering
  \small
  \begin{tabular}{c|p{0.42\textwidth}|p{0.42\textwidth}}
    \hline \hline
    $\lambda$ & Linear aggregation function $q_\mathsf{KL}$ & Log-linear aggregation function $\bar q_\mathsf{KL}$ \\
    \hline \hline
    $1$ & \multicolumn{2}{p{0.84\textwidth}}{\textbf{Alpha Airlines}: Your ticket to paradise.} \\
    \hline
    $0.75$ & \textbf{Alpha Airlines}: Fly to Hawaii and experience the beauty of the islands with aloha. \newline \textbf{Beta Resorts}: Stay at our resorts and enjoy the best of Hawaii. & Feel the magic of Hawaii with a flight on \textbf{Alpha Airlines}, now offering 20\% off all flights when you book with us today! \\
    \hline
    $0.6$ & \textbf{Alpha Airlines} flies you to Hawaii, where you can enjoy a week-long stay at the \textbf{Beta resort} for just \$1000. & Experience the magic of Hawaii with a flight on \textbf{Alpha Airlines}, now offering 20\% off all flights when you book with us today!  \\
    \hline
    $0.55$ & \textbf{Alpha Airlines} flies you to Hawaii, where you can enjoy a week-long stay at the \textbf{Beta resort}. & Escape to the tropical paradise of Hawaii with [\textbf{Alpha Airlines}]!  \\
    \hline
    $0.5$ & \textbf{Alpha Airlines} flies you to Hawaii, where you can enjoy a beautiful sunset on the beach. Stay 3 nights and get the 4th free at the \textbf{Beta Resort}. & Experience the magic of Hawaii with a stay at the luxurious [\textbf{Beta Resort}] and a refreshing flight on [\textbf{Alpha Airlines}].  \\
    \hline
    $0.45$ & Fly \textbf{Alpha Airlines} to sunny Hawaii and enjoy the secluded beaches and private lagoons of the Royal Hawaiian \textbf{Beta Resort}. & Experience the magic of Hawaii with a stay at the luxurious [\textbf{Beta Resort}] and a special flight offer from [\textbf{Alpha Airlines}].  \\
    \hline
    $0.4$ & Fly \textbf{Alpha Airlines} to sunny Hawaii and enjoy the first-class treatment that awaits you at \textbf{Beta Resort}, all for one low price. & Experience the magic of Hawaii at the [\textbf{Beta Resort}], where you'll feel like you're in a tropical paradise.  \\
    \hline
    $0.25$ & Experience the magic of Hawaii at the \textbf{Beta Resort}, where the sun shines brighter and the waves crash louder — book your stay today with our exclusive 20\% off discount! & Experience the magic of Hawaii at the \textbf{Beta Resort}, where you'll be pampered like royalty and surrounded by breathtaking beauty.  \\
    \hline
    $0$ & \multicolumn{2}{p{0.84\textwidth}}{Hawaii's \textbf{Beta Resort}: a paradise where the sun shines brighter, the waves sing sweeter, and the sand feels softer.}  \\
    \hline \hline
  \end{tabular}
  \caption{Text generation from two aggregation functions with different $\lambda = b_1 / (b_1 + b_2)$.}
  \label{tab:demo-short}
\end{table*}
}{}

\section{Demonstration}\label{sec:demo}

We implement the aggregation functions proposed in Section~\ref{sec:aggregation_functions} and discuss the examples they produce. Off-the-shelf LLMs generate full text passages. In our case, we need to peek at the internal states of LLMs (the probability distributions over tokens) at each token generation stage.
Therefore, we use a custom version of the \modelname{} model with a modified inference method that allows access to the token distributions.

Starting from the same base model, we customize it for different agents by prompt-tuning. We start with a base model $f:T^* \rightarrow \Delta(T)$ and for each agent $i$ we come up with a ``prompt'' $s_0^i \in T^*$ and now we define for each agent $i$ the LLM $f_i:T^* \rightarrow \Delta(T)$ as:
$$f_i(s) = f(s_0^i \oplus s) $$
Therefore if $ \tau_1, \hdots, \tau_{k-1}$ are the first $k-1$ tokens generated, then the preferred distribution of agent $i$ over the $k$-th token is given by:
\[p_i = f(s_0^i \oplus s \oplus \tau_1 \oplus \cdots \oplus \tau_{k-1}),\]

A key advantage of simulating LLM agents with different prompts is the ability to use a single LLM, making multiple queries with different prompts instead of serving multiple LLMs concurrently.
Because of their large sizes, serving multiple LLMs can be very costly and practically challenging. As one of the key strengths of LLMs, the flexibility to accomplish various tasks with properly designed prompts sheds light to the possibility of training one universal LLM that can, for example, generate different ads according to agent-specific prompts. That is, the universal advertising LLM, plus an advertiser-specific prompt, behaves like an advertiser-specific LLM through the online in-context few-shot learning.

\subsection{Setups}
We illustrate our method with a co-marketing
example as well as a competing brands example. In the setup of co-marketing, the two agents would like to advertise for their brands, ``Alpha Airlines'' and ``Beta Resort'' respectively, regarding a shared topic ``Hawaii.''
We intentionally choose fictitious brands in order to avoid the model directly retrieving any existing ads. We use the brand names ``Alpha'' and ``Beta'' that do not have strong meanings on their own to minimize any potential hallucination, as we are using a common purposed LLM that is not optimized for our task. Each agent is given the following prompt:

\begin{quote}
  ``You are an expert of writing texts that naturally combines two ads together. Your choice of words and sentences is full of artistic flair.

  Write a one-sentence ad for \rule{2cm}{0.15mm}.''
\end{quote}

Agent $A$ uses {\em ``a flight to Hawaii using [\textbf{Alpha Airlines}]''} to fill the blank, while agent $B$ uses {\em ``a vacation in Hawaii at the [\textbf{Beta Resort}]''}. The first two sentences in the prompt aim to improve the quality of the ad generation through {\em assigning roles} (see, for example, \citep{wu2023large}). A natural question is whether the proposed method can adjust the combining strategy according to the context.  Since in both the linear aggregation rule $q_{\mathsf{KL}}$ and the log-linear aggregation rule $\bar q_{\mathsf{KL}}$, there is only one degree of freedom, we parameterize the response by  $\lambda = \nicefrac{b_1}{(b_1 + b_2)}$.

The demonstration for competing brands  \hybridtext{is shown in Appendix~\ref{sec:demo-compete}. In particular, the aggregated results nicely avoid advertising for competing brands together.}{uses the same setup, except that the two brands are ``Beta Resort'' and ``Gamma Hotel''. We seek to understand whether competition will lead LLMs to avoid advertising for both advertisers in the same creative.
}

\hybridtext{}{

}

\hybridtext{}{

\begin{table*}[h!]
  \centering
  \small
  \begin{tabular}{c|p{0.42\textwidth}|p{0.42\textwidth}}
    \hline \hline
    $\lambda$ & Linear aggregation function $q_\mathsf{KL}$ & Log-linear aggregation function $\bar q_\mathsf{KL}$ \\
    \hline \hline
    $1$ & \multicolumn{2}{p{0.84\textwidth}}{For those who want a relaxing vacation in Hawaii, the \textbf{Gamma Hotel} is the perfect place to stay.} \\
    \hline
    $0.75$ & Experience the magic of Hawaii at the \textbf{Gamma Hotel}, where the sun shines brighter, the waves crash louder, and the nights are filled with endless possibilities. & Experience the magic of Hawaii at the [\textbf{Gamma Hotel}], where you'll be pampered by our luxurious amenities and stunning views of the Pacific Ocean. \\
    \hline
    $0.6$ & For those who want a truly unforgettable vacation, come to Hawaii and stay at the [\textbf{Gamma Hotel}]. & Plan your dream vacation in Hawaii with our special rates at the luxurious [\textbf{Gamma Hotel}].  \\
    \hline
    $0.55$ & For those who want a truly unforgettable vacation, come to Hawaii and stay at the [\textbf{Gamma Hotel}]. & Plan your dream vacation in Hawaii with our special rates at the luxurious [\textbf{Gamma Hotel}]!  \\
    \hline
    $0.5$ & For those who want a relaxing vacation in Hawaii, the \textit{Gamma Resort} is the perfect place to stay. & \textbf{Gamma Hotel}: a touch of heaven in the heart of Hawaii.  \\
    \hline
    $0.45$ & Hawaii is a beautiful place to vacation, and the \textit{Beta Hotel} is the perfect place to stay. & Plan your dream vacation in Hawaii with our special rates at the luxurious [\textbf{Beta Resort}].  \\
    \hline
    $0.4$ & Escape to the beautiful beaches of Hawaii and stay at the luxurious \textbf{Gamma Hotel}, where you'll be treated like royalty. & Escape to the tropical paradise of Hawaii and stay at the luxurious [\textbf{Beta Resort}], where you'll enjoy stunning views of the Pacific Ocean and easy access to all the island has to offer.  \\
    \hline
    $0.25$ & Escape to the tropical paradise of Hawaii and stay at the luxurious \textbf{Beta Resort}, where you'll be surrounded by lush greenery and stunning ocean views. & Escape to the tropical paradise of Hawaii and stay at the luxurious \textbf{Beta Resort}, where you'll be pampered from the moment you arrive.  \\
    \hline
    $0$ & \multicolumn{2}{p{0.84\textwidth}}{Escape to the tropical paradise of Hawaii and stay at the luxurious \textbf{Beta Resort}, where you'll be pampered from the moment you arrive.}  \\
    \hline \hline
  \end{tabular}
  \caption{Text generation from two aggregation functions with different $\lambda = b_1 / (b_1 + b_2)$ in the competing setup.}
  \label{tab:demo-short-compete}
\end{table*}

}

\subsection{Results}

The results for the co-marketing example are listed in Table~\ref{tab:demo-short}, where from top to bottom, the value of $\lambda$ decreases from $1$ to $0$. As we can see for both aggregation functions, the generated texts roughly follow the pattern of ``only Alpha Airlines'' $\rightarrow$ ``both Alpha Airlines and Beta Resort'' $\rightarrow$ ``only Beta Resort'' when $\lambda$ goes from $1$ to $0$. This is expected, as $\lambda$ going from $1$ to $0$ corresponds to $b_2$ increasing from $0$ to $\infty$ with $b_1$ fixed (or $b_1$ decreasing from $\infty$ to $0$ with $b_2$ fixed). The thresholds of pattern changes are $0.75$ and $0.4$ for the linear aggregation, and $0.5$ and $0.45$ for the log-linear aggregation.
We emphasize that the example is generated with a general purposed LLM, and it is reasonable to believe that the performance can be improved with proper fine-tuning for the specific task at hand.

\hybridtext{}{
The results for the example with competing brands is listed in Table~\ref{tab:demo-short-compete}. As we can see for both aggregation functions, the generated texts roughly follow the pattern of ``only Gamma Hotel'' $\rightarrow$ ``only Beta Resort'' as $\lambda$ goes from $1$ to $0$. The thresholds of pattern changes are $0.5\sim0.45$ for both the linear aggregation and the log-linear aggregation. We note that for $\lambda = 0.45$ and $\lambda = 0.5$, the generated results from the linear aggregation seem to get confused on the brands of ``Gamma Hotel'' and ``Beta Resort'' to generate non-existing ``Gamma Resort'' and ``Beta Hotel''. We expect such phenomenons can be addressed by additional task specific training to enforce the extra requirement on using the exact terms on brands.

}

\hybridtext{}{\section{Conclusion}

In this work, we  put forward a mechanism-design approach to the problem of aggregating the output of several LLMs into one. We have proposed an auction format, the \emph{token auction model}, which operates on a token-by-token basis and allows the LLM agents to influence the output through single-dimensional bids. We explored mechanism design with such LLMs through a robust lens, which makes minimal assumptions about the agents' preferences. Working under this paradigm, we have shown that natural requirements on the incentive properties of the auction mechanism, imply that it should feature a monotone aggregation function. We then showed that under robust preferences, any monotone aggregation function  enables second-price style payments (akin to those in the Vickrey auction or the Generalized-Second Price auction).
We also explored the design of aggregation functions inspired by common loss functions used in LLM training, such as KL-divergence or PPO (in RLHF). As a ``proof of concept'' for our designed mechanism, we demonstrated promising outcomes of our aggregation methods by implementing these aggregation functions in a real-world state-of-the-art LLM using prompt tuning.
}

\bibliographystyle{ACM-Reference-Format}
\bibliography{www/sample-base}

\appendix
\hybridtext{}{
\section{Omitted Proofs from Section \ref{sec:incentive:mono}}\label{appendix:proof-mono}

Our proofs make use of the following technical lemma, which shows that any complete preference relation on a countable set  can be expressed via a function that maps elements of the countable set into the non-negative rationals.

\begin{lemma}\label{lemma:debreu}
Let $X$ be a countable set, and let $\succeq$ be a complete preference relation on $X$. Then there is a function $f: X \rightarrow \Bids_+$ such that $x \preceq y$ if and only if $f(x) \leq f(y)$.
\end{lemma}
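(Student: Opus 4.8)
I read ``complete preference relation'' as a complete and transitive relation (a total preorder); transitivity is essential, since a non\-transitive complete relation cannot be represented by a real\-valued function. The plan is to give a direct inductive construction — a countable version of Debreu's representation theorem — that moreover keeps all values strictly positive and rational, so that they land in $\Bids_+$. (A slicker but less self\-contained alternative: let $\sim$ be indifference and note $\succeq$ descends to a total order on the countable quotient $X/{\sim}$; by Cantor's theorem every countable linear order embeds into $(\Bids,\le)$, and since $\Bids_{>0}$ is a countable dense linear order without endpoints it is order\-isomorphic to $\Bids$, so the embedding can be taken into $\Bids_{>0}\subseteq\Bids_+$; precomposing with the quotient map $X\to X/{\sim}$ gives $f$.)

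For the explicit construction, fix an enumeration $X=\{x_1,x_2,\dots\}$ (the finite case is identical and simply terminates). Define $f(x_n)$ by induction on $n$, maintaining the invariant that after stage $n$ the values $f(x_1),\dots,f(x_n)$ are strictly positive rationals with $x_i\preceq x_j \iff f(x_i)\le f(x_j)$ for all $i,j\le n$. Set $f(x_1)=1$. At stage $n\ge 2$, use completeness to partition $\{1,\dots,n-1\}$ into $A=\{i: x_i\prec x_n\}$, $B=\{i: x_i\succ x_n\}$, and $C=\{i: x_i\sim x_n\}$. If $C\neq\emptyset$, put $f(x_n)=f(x_i)$ for any $i\in C$, which is well defined by transitivity. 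If $C=\emptyset$ and $B=\emptyset$, put $f(x_n)=1+\max_{i<n}f(x_i)$. If $C=\emptyset$ and $A=\emptyset$, put $f(x_n)=\frac{1}{2}\min_{i<n}f(x_i)$, a positive rational strictly below every earlier value. Otherwise let $a=\max_{i\in A}f(x_i)$ and $b=\min_{i\in B}f(x_i)$; picking witnesses $i\in A$, $j\in B$ we have $x_i\prec x_n\prec x_j$, so $x_i\prec x_j$, whence $a=f(x_i)<f(x_j)=b$ by the inductive invariant, and we may set $f(x_n)=\frac{a+b}{2}\in\Bids_{>0}$.

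It remains to check the invariant is preserved and to pass to the limit: for each $i<n$ exactly one of $x_i\prec x_n$, $x_i\sim x_n$, $x_i\succ x_n$ holds, and by the case analysis the matching one of $f(x_i)<f(x_n)$, $f(x_i)=f(x_n)$, $f(x_i)>f(x_n)$ holds (using transitivity in the $C\neq\emptyset$ case to compare with members of $A$ and $B$), so $x_i\preceq x_n\iff f(x_i)\le f(x_n)$; together with the stage-$(n-1)$ invariant this covers all pairs among $x_1,\dots,x_n$. Hence $f\colon X\to\Bids_{>0}\subseteq\Bids_+$ is well defined and, for any $x_m,x_k$, applying the invariant at stage $\max(m,k)$ gives $x_m\preceq x_k\iff f(x_m)\le f(x_k)$, as required. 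The main obstacle is precisely staying inside $\Bids_+$: one must show the ``slot'' between $A$ and $B$ is a genuine nonempty open rational interval (this is exactly where transitivity of $\succeq$ is used) and must absorb a new global minimum without reaching $0$ (handled by keeping all values strictly positive and halving toward $0$).
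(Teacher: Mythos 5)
Your proof is correct and follows essentially the same route as the paper: enumerate $X$, assign $f(x_1)=1$, and inductively place each new element's value in $\Bids_+$ between the values already assigned to strictly worse and strictly better elements, reusing an existing value in case of indifference. Your version is slightly more explicit about the empty-set cases and about staying strictly positive, but the construction and its verification coincide with the paper's argument.
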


\begin{proof}
Since $X$ is countable, we can arrange it in a sequence $\{x_1, x_2, \ldots\}$. We describe a procedure to define $f(x_i)$ for each $x_i$ in turn. Let $f(x_1) = 1$. Suppose we have defined $f(x_1)$, $f(x_2)$, $\ldots$, $f(x_n)$ in such a way that all order relations are preserved. That is, for all $i,j \leq n$, we have $x_i \preceq x_j$ if and only if $f(x_i) \leq f(x_j)$. We want to define $f(x_{n+1})$. If $x_{n+1} \succeq x_i$ and $x_{n+1} \preceq x_i$ (i.e., $x_{n+1} \sim x_i$) for some $i \leq n$, then we can let $f(x_{n+1}) = f(x_i)$ and we have extended the function to one more element in a way that preserves all order relations until (including) $x_{n+1}$. Otherwise, we can partition the set $\{x_1, \ldots, x_n\}$ into two sets
\[
L = \{x_i:\; i \leq n,\; x_i \prec x_{n+1}\} \quad\text{and}\quad H = \{x_i: \; i \leq n,\; x_i \succ x_{n+1}\}.
\]
In the set $\Bids_+$  of all positive rational numbers, every element  in  $f(L) = \{ f(x_i)\}_{x_i \in L}$   is strictly smaller than  every element in $f(H)$. Choose $q \in \Bids_+$ strictly larger than all the elements of $f(L)$ and strictly smaller than all the elements of $f(H)$. For each $x_i$ with $i \leq n$ the relationship between $x_i$ and $x_{n+1}$ is the same as the relationship between $f(x_i)$ and $q$. Therefore, letting $f(x_{n+1}) = q$ extends the function to one more element, preserving all order relations, as required. The resulting function defined on all of $X$ is thus an isomorphism from $X$ to the image of $f$.
\end{proof}

We remark that the assumption of $X$ being countable in Lemma \ref{lemma:debreu} is crucial as the lemma does not hold for uncountable set $X$. A canonical counter example is  $X = \mathbb{R}^2$ with the lexicographic   ordering: $x \succ x'$ if and only if either $x_1 > x'_1$ or $x_1 = x'_1 \wedge x_2 \geq x'_2$. It is well-known that there is no utility function (continuous or non-continuous) that can induce this complete order \citep{sen2018collective}. This is also why we restrict bids to be non-negative rational numbers, which is certainly practical but also has underlying technical reasons.

  \subsection*{Proof of Lemma~\ref{lemma:full-order}}

\begin{proof}
  We first prove the ``only if'' (``$\Infer$'') direction. That is, suppose $\M = \langle q, z\rangle$ is payment-monotone, then it must imply a total order over $Q(\b_{-i}, \p) = \{ q(b_i, \b_{-i}, \p): b_i \in \Bids_+\}$ for any fixed $\b_{-i}$ and $\p$.

  Fix any $\b_{-i}$ and $\p$. For any $q, q' \in Q(\b_{-i}, \p) = \{ q(b_i, \b_{-i}, \p): b_i \in \Bids_+\}$ such that $q = q(b_i, \b_{-i}, \p)$, $q' = q(b'_i, \b_{-i}, \p)$.  Let $z_i = z_i(b_i, \b_{-i}, \p)$ and $z'_i = z_i(b'_i, \b_{-i}, \p)$ be the corresponding payment given by $\M$. Without loss of generality, suppose that $z_i \geq z'_i$. Then by payment-monotonicity of the mechanism $\M$, we have $q \succeq_i q'$. In other words, $\succeq_i$ establishes a total order over $Q$.

  \vspace{3mm}
\noindent Next we show the ``if'' (``$\Longleftarrow$'') direction. That is, given a total order over $Q(\b_{-i}, \p) = \{ q(b_i, \b_{-i}, \p): b_i \in \Bids_+\}$ for any fixed $\b_{-i}$ and $\p$, we can construct a payment rule $z$ such that $\M = \langle q, z\rangle$ is payment-monotone.

  Fix any $\b_{-i}$ and $\p$. Since $\succeq_i$ establishes a total order over $Q(\b_{-i}, \p)$ and $Q(\b_{-i}, \p)$ is countable, by Lemma~\ref{lemma:debreu} there exists a function $f_{i,\b_{-i},\p}: Q(\b_{-i}, \p) \rightarrow \Bids_+$ such that $\forall q,q' \in Q(\b_{-i}, \p)$,
  \[q \succeq_i q' \iff f_{i,\b_{-i},\p}(q) \geq f_{i,\b_{-i},\p}(q').\]
  Letting $z_i(\b, \p) = f_{i,\b_{-i},\p}(q(\b, \p))$, we obtain a payment-monotone mechanism $\langle q, z \rangle$.
\end{proof}

 \subsection*{Proof of Lemma~\ref{lemma:consistent}}

\begin{proof}

Given the consistency of the distribution aggregation function $q$, we can define a preference relation $\succeq_{i,\p}$ on $\Bids_+$ according to the total orders over $Q(\b_{-i}, \p)$ for all possible $\b_{-i}$ (assumed by the lemma) such that, $\forall b_i, b'_i \in \Bids_+$:
  \begin{itemize}
    \item If there exists $\b_{-i}$ such that $q(b_i, \b_{-i}, \p) \succ_i q(b'_i, \b_{-i}, \p)$, then $b_i$ is preferred to $b'_i$, i.e., $b_i \succ_{i,\p} b'_i$;
    \item If $\forall \b_{-i}$, $q(b_i, \b_{-i}, \p) = q(b'_i, \b_{-i}, \p)$, then $b_i$ and $b'_i$ are indifferent, i.e., $b_i \sim_{i,\p} b'_i$.
  \end{itemize}

  Note that the validity of this construction is guaranteed by the definition of consistency (Definition~\ref{def:consistent-aggregate}):

  \begin{itemize}
  \item For any $b_i, b'_i$ assigned $b_i \succ_{i,\p} b'_i$, there exists some $\b_{-i}$ such that $q(b_i, \b_{-i}, \p) \succ_i q(b'_i, \b_{-i}, \p)$. By the definition of consistency, $\forall \b'_{-i}$, $q(b_i, \b'_{-i}, \p) \succeq_i q(b'_i, \b'_{-i}, \p)$. So we never assign the contradicting order ($b'_i \succ_{i,\p} b_i$).

  \item Meanwhile, for any $b_i, b'_i$ assigned $b_i \sim_{i,\p} b'_i$, we have that $\forall \b_{-i}$, $q(b_i, \b_{-i}, \p) = q(b'_i, \b_{-i}, \p)$, which implies that for all $\b_{-i}$, $q(b_i, \b_{-i}, \p) \not\succ_i q(b'_i, \b_{-i}, \p)$. Therefore, we never assign $b_i \succ_{i,\p} b'_i$.
  \end{itemize}

  By the assumption of the lemma, $\succeq_i$ establishes a total order over $Q(\b_{-i}, \p)$. We argue that this implies that $\succeq_{i,\p}$ is a complete preference relation on $\Bids_+$. Concretely, for every pair $b_i,b'_i \in \Bids_+$ we have $q(b_i), q(b'_i) \in Q(\b_{-i}, \p)$. So either for some $\b_{-i}$, $q(b_i) \succ_i q(b'_i)$ (or $q(b'_i) \succ_i q(b_i)$), or for all $\b_{-i}$, $q(b_i) = q(b'_i)$, due to the lemma's assumption of total order over $Q(\b_{-i}, \p)$. Hence every pair $b_i, b'_i \in \Bids_+$ has an order under $\succeq_{i,\p}$ (i.e., $\succ_{i, \p}$ or $\sim_{i, \p}$), so the preference relation $\succeq_{i, \p}$ is complete.

  By Lemma~\ref{lemma:debreu}, since $\Bids_+$ is countable, there exists a function $f_{i, \p}: \Bids_+ \rightarrow \Bids_+$ that represents the preference relation $\succeq_{i, \p}$, i.e.,
  \begin{gather*}
    \forall b_i, b'_i \in \Bids_+,~b_i \succeq_{i,\p} b'_i \iff f_{i,\p}(b_i) \geq f_{i,\p}(b'_i).
  \end{gather*}
  Hence we can construct $\tilde q$ as:
  \begin{gather*}
    \tilde q(f_{i,\p}(b_i), \b_{-i}, \p) = q(b_i, \b_{-i}, \p).
  \end{gather*}
  Note that for any $b_i, b'_i$ such that $f_{i,\p}(b_i) = f_{i,\p}(b'_i)$, we have $b_i \sim_{i, \p} b'_i \Infer q(b_i, \b_{-i}, \p) = q(b'_i, \b_{-i}, \p)$ for every $\b_{-i}, \p$. Hence, the construction of $\tilde q$ never assigns different values to the same input.

  To complete the construction of $\tilde q(\cdot, \b_{-i}, \p)$, we expand its definition from the image set of $f_{i,\p}$ to the entire bid space while preserving monotonicity. The existence of the expansion is guaranteed by the monotonicity of $\tilde q(\cdot, \b_{-i}, \p)$ on the image set of $f_{i,\p}$ and the fact that the space of aggregations is compact.

  Letting $\pi_i(b) = f_{i,\p}(b)$, we have $q(b_i, \b_{-i}, \p) = \tilde q(\pi_i(b_i), \b_{-i}, \p)$ and $\tilde q$ is a monotone distribution aggregation function for agent $i$. Applying the same argument and the relabeling procedure for every $i$ from $1$ to $n$, completes the proof.
\end{proof}

\subsection*{Proof of Theorem~\ref{thm:monotone}}
\begin{proof}

Applying Lemma~\ref{lemma:full-order}, we know that the payment monotonicity of $\M$ implies a total order over $Q(\b_{-i}, \p)$ for any fixed $\b_{-i}$ and $\p$. Then applying Lemma~\ref{lemma:consistent}, we know that for any $\p$ there exist strategy mappings $\pi_i: \Bids_+ \rightarrow \Bids_+$ for each $i$ such that $q(\b, \p) = \tilde{q}(\pi(\b), \p)$ and $\tilde{q}(\cdot, \p)$ is a monotone aggregation function.

Now, following the same argument in the proof of Lemma~\ref{lemma:consistent}, let us further define $\tilde{z}(\pi(\b),\p) = z(\b,\p)$ for each $\b \in \Bids_+^n$ and $\p \in \Delta(T)^n$. So $\mathcal{M} = \langle q,z \rangle$ is strategically equivalent to $\tilde{\mathcal{M}} = \langle \tilde{q}, \tilde{z} \rangle$ by definition.

It remains to show that the mechanism $\tilde \M = \langle \tilde q, \tilde z\rangle $ is payment-monotone. We know that the original mechanism  $ \M = \langle   q,   z\rangle $ satisfies payment monotonicity, meaning that for each $\p$, $\b_{-i}$, $b_i$, $b'_i$,
\[
z_i(b_i,\b_{-i},\p) \geq z_i(b'_i,\b_{-i},\p)  \Longleftrightarrow q(b_i,\b_{-i},\p) \succeq_i q(b'_i,\b_{-i},\p).
\]
But then, with $\b = (b_i,\b_{-i})$ and $\b' = (b'_i, \b_{-i})$, we also have
\begin{align*}
&\tilde{z}_i(\pi(\b),\p) = z_i(\b,\p) \geq z_i(\b', \p) = \tilde{z}_i(\pi(\b'),\p)\\
\Longleftrightarrow \quad
&\tilde{q}(\pi(\b),\p) = q(\b,\p) \succeq_i q(\b',\p) = \tilde{q}(\pi(\b'),\p),
\end{align*}
so the pair $\tilde{q}$, $\tilde{z}$ satisfies payment monotonicity as needed.
\end{proof}
}

\section{Omitted Proofs from Section~\ref{sec:spa}}\label{app:proofs-spa}

\subsection*{Proof of Lemma~\ref{lemma:monotone-alter}}

\begin{proof}
  We first prove the ``only if'' (``$\Infer$'') direction. Suppose $q$ is a monotone distribution aggregation function. By Definition~\ref{def:mon-agg}, for any agent $i$ and $b'_i \geq b_i \geq 0$, we have
  \[q(b'_i, \b_{-i}, \p) \succeq_i q(b_i, \b_{-i}, \p) \succeq_i q(0, \b_{-i}, \p).\]
  For any undersampled token $t \in \Tunder$, because $q_t(0, \b_{-i}, \p) \leq (p_i)_t$, then by Definition~\ref{def:obvious-prefer}, we have
  \[q_t(b_i, \b_{-i}, \p), q_t(b'_i, \b_{-i}, \p) \in [q_t(0, \b_{-i}, \p), (p_i)_t].\]
  Hence by $q(b'_i, \b_{-i}, \p) \succeq_i q(b_i, \b_{-i}, \p)$, we have
  \[q_t(b'_i, \b_{-i}, \p) \geq q_t(b_i, \b_{-i}, \p),\]
  namely, $q_t(b_i, \b_{-i}, \p)$ weakly increases with $b_i$ and never goes above $(p_i)_t$.

  Similarly, we can prove that for any oversampled
  token $t \in \Tover$, $q_t(b_i, \b_{-i}, \p)$ weakly decreases with $b_i$ and never goes below $(p_i)_t$.

  \medskip
  Then we prove the ``if'' (``$\Longleftarrow$'') direction. Consider any $b'_i \geq b_i$. For any undersampled
  token $t \in \Tunder$, as $q_t(b_i, \b_{-i}, \p) \leq (p_i)_t$ weakly increases with $b_i$, we have
  \[q_t(b_i, \b_{-i}, \p) \leq q_t(b'_i, \b_{-i}, \p) \leq (p_i)_t.\]
  Similarly, we have for any oversampled
  token $t \in \Tover$,
  \[q_t(b_i, \b_{-i}, \p) \geq q_t(b'_i, \b_{-i}, \p) \geq (p_i)_t.\]
  Then by Definition~\ref{def:obvious-prefer},
  \[q(b'_i, \b_{-i}, \p) \succeq_i q(b_i, \b_{-i}, \p),\]
  which then implies the monotonicity of $q$.
\end{proof}

\subsection*{Proof of Theorem~\ref{thm:stable}}

\begin{proof}
Given an aggregation $q(b_i)$ we construct a stable sampling procedure $\sigma$. Let $Q_+(b_i)$ and $Q_-(b_i)$ be the probability of sampling tokens from $\Tunder$ and $\Tover$:
$$\textstyle Q_+(b_i) = \sum_{t \in \Tunder} q_t(b_i),
\qquad
Q_-(b_i) = \sum_{t \in \Tover} q_t(b_i).$$
Both are monotone as $q$ is monotone (by Lemma~\ref{lemma:monotone-alter}).
Reparameterize functions $q_t(b_i)$ in the range $I = [Q_+(0), Q_+(\infty)]$ by defining:\footnote{Here $Q_+^{-1}(x)$ refers to a generalized inverse (or the quantile function \citep{bulow1989simple}) so that it is properly defined even when $Q_+(x)$ is discontinuous.}
$$\hat q_t(x) = q_t(Q_+^{-1}(x)), \forall t \in T, x \in I.$$

Since $\hat q_t(x)$ is monotone, by Lebesgue's Differentiation Theorem, it is differentiable almost everywhere on $I$. We observe that:
\[\textstyle \sum_{t \in \Tunder} \hat q_t(x) = Q_+(Q_+^{-1}(x)) = x, ~~
\sum_{t \in \Tover} \hat q_t(x) = Q_-(Q_+^{-1}(x)) = 1-x.\]
Then $\hat{q}'_t(x)$ for $t \in \Tunder$ forms a probability distribution over $\Tunder$. Similarly $-\hat{q}'_t(x)$ for $t \in \Tover$ forms a probability distribution over $\Tover$. Define $\kappa^+(x), \kappa^-(x) \in \Delta(T)$ as $\kappa^+_t(x) = \hat{q}'_t(x)$ for $t \in \Tunder$, and zero otherwise and $\kappa^-_t(x) = -\hat{q}'_t(x)$ for $t \in \Tover$ and zero otherwise.

We also define the vector $q^+, q^- \in \Delta(T)$ such that $q^+_t = q_t(0)/Q_+(0)$ for $t \in \Tunder$ and zero for $t \in \Tover$. Similarly: $q^-_t = q_t(\infty) / Q_+(\infty)$ for $t \in T^-$ and zero otherwise.

Finally, we define a deterministic function $$\Sampler : \Delta(T) \times [0,1] \rightarrow T$$ that takes a probability vector $p \in \Delta(T)$ and $r \in [0,1]$ and outputs an index $t \in T$ such that $\sum_{j < t} p_j < r \leq \sum_{j \leq t} p_j$.

Now, we are ready to define the stable sampling procedure. Let $\mathcal{R}$ be the uniform distribution on $[0,1]^2$. Given $r = (r_A, r_B) \sim \mathcal{R}$ we define the output $t=\sigma(b_i, r)$ as follows:
\begin{enumerate}
\item if $r_A \leq Q_+(0)$, $t = \Sampler(q^+, r_B) \in \Tunder$;
\item if $Q_+(0) < r_A \leq Q_+(b_i)$, $t = \Sampler(\kappa^+(r_A), r_B) \in \Tunder$;
\item if $Q_+(b_i) < r_A \leq Q_+(\infty)$, $t = \Sampler(\kappa^-(r_A), r_B) \in \Tover$;
\item if $Q_+(\infty) < r_A$, $t = \Sampler(q^-, r_B) \in \Tover$.
\end{enumerate}

Since $\sigma(b_i, r)$ is deterministic, for any fixed $r$, either the output can not be influenced by the bid ($r_A < Q_+(0)$ or $r_A > Q_+(\infty)$) or it can only cause the output to shift from an oversampled token $\Sampler(\kappa^-(r_A), r_B)$ to an undersampled token $\Sampler(\kappa^+(r_A), r_B)$.

We now argue that the tokens are sampled with the correct probabilities. For $t \in \Tunder$, the total probability of getting sampled is:
\begin{gather*}\textstyle \int_0^{Q_+(0)} q^+_t \mathrm{d}r_A + \int_{Q_+(0)}^{Q_+(b)} \hat{q}'_t(r_A) \mathrm{d}r_A = q_t(0) + \hat{q}_t(Q_+(b)) - \hat{q}_t(Q_+(0)) \\
 = q_t(0) + q_t(b) - q_t(0) = q_t(b). \end{gather*}
Similarly for tokens in $\Tover$, the probability of being sampled is:
\begin{gather*} \textstyle \int_{Q_+(b)}^{Q_+(\infty)} -\hat{q}'_t(r_A) \mathrm{d}r_A + \int_{Q_+(\infty)}^1 q^-_t \mathrm{d}r_A  \\ = \hat{q}_t(Q_+(b)) - \hat{q}_t(Q_+(\infty))  + q_t(\infty)
 = q_t(b) - q_t(\infty) + q_t(\infty) = q_t(b). \end{gather*}
This completes the proof.
\end{proof}

\section{Universally Stable Sampling}\label{app:universally-stable}

\begin{example}[Counterexample $4$-token]\label{exmpl:4-token}
  Consider two agents $\{1, 2\}$ and $4$ tokens $\{t_1, t_2, t_3, t_4\}$. Assume that both agents have the same preferred distribution $p_1 = p_2 = (0,0,.5, .5)$ and the allocation function is such that if both agents bid zero the allocation is $(.5, .5, 0, 0)$. Hence both both agents have the same set of favored tokens $\Tunder = \{t_3, t_4\}$ and less favored tokens $\Tover = \{t_1, t_2\}$. The aggregation function $q(b_1, b_2, p_1, p_2)$ is given by the following table:
  \begin{center}
     \begin{tabular}{|c|c|c|}
       \hline
                 & $b_1 = 0$ & $b_1 = 1$ \\
       \hline
       $b_2 = 0$ & $q_{00} = (.5, .5, 0, 0)$& $q_{10} = (0, .5, .5, 0)$  \\
       \hline
       $b_2 = 1$ & $q_{01} = (.5, 0, .5, 0)$ & $q_{11} =  (0, 0, .5, .5)$  \\
       \hline
     \end{tabular}
  \end{center}
  One can verify that the aggregation function is monotone: When either of the agents increase the bid from $0$ to $1$, exactly $1/2$ of the probability mass moves from $\Tover = \{t_1,t_2\}$ to $\Tunder = \{t_3,t_4\}$.

  Now we show that there does not exist a universally stable sampling algorithm that implements this aggregation function. Suppose there exist one, $\sigma$, let $r_A = \{r| \sigma(q_{00}, r) = t_1\}$ and $r_B = \{r | \sigma(q_{00}, r) = t_2\}$. Because $\sigma$ is stable for bidder $1$, when bidder $1$ increases bid, the probability mass only transfers from $\Tover$ to $\Tunder$.

  In this case, when the bid profile $(b_1, b_2)$ moves from $(0, 0)$ to $(1, 0)$, we must have
  \[\sigma(q_{10}, r) = \left\{\begin{array}{ll}t_3, & r \in r_A \\ t_2, & r \in r_B\end{array}\right..\]
  Further apply the same argument when $(b_1, b_2)$ moves from $(1, 0)$ to $(1, 1)$, we have
  \[\sigma(q_{11}, r) = \left\{\begin{array}{ll}t_3, & r \in r_A \\ t_4, & r \in r_B\end{array}\right..\]

  However, if we consider $(b_1, b_2)$ moves along the path $(0, 0) \rightarrow (0, 1) \rightarrow (1, 1)$, we should have
  \[\sigma(q_{01}, r) = \left\{\begin{array}{ll}t_1, & r \in r_A \\ t_3, & r \in r_B\end{array}\right., \qquad \sigma(q_{11}, r) = \left\{\begin{array}{ll}t_4, & r \in r_A \\ t_3, & r \in r_B\end{array}\right..\]
  We end up with a contradiction on the value of $\sigma(q_{11}, r)$ while moving from $(0, 0)$ to $(1, 1)$ along two different paths.
\end{example}
\hybridtext{}{}

\begin{example}[Counterexample $3$-token]\label{exmpl:3-token}
  Consider two agents $\{1, 2\}$ and $3$ tokens $\{t_1, t_2, t_3\}$, where the agents have different sets of favored (less favored) tokens. In particular, $T_1^+ = \{t_1, t_3\}, T_1^- = \{t_2\}$ and $T_2^+ = \{t_3\}, T_2^- = \{t_1, t_2\}$. The aggregation function is given by the following table:
  \begin{center}
     \begin{tabular}{|c|c|c|}
       \hline
                 & $b_1 = 0$ & $b_1 = 1$ \\
       \hline
       $b_2 = 0$ & $q_{00} = (.5, .5, 0)$& $q_{10} = (.5, 0, .5)$  \\
       \hline
       $b_2 = 1$ & $q_{01} = (0, .5, .5)$ & $q_{11} = (.5, 0, .5)$  \\
       \hline
     \end{tabular}
  \end{center}
  One can verify that the aggregation function is monotone: When $b_1$ increases from $0$ to $1$, exactly $1/2$ of the probability mass moves from $t_2$ to $t_3$ (when $b_2 = 0$) or $t_1$ (when $b_2 = 1$). When $b_2$ increases from $0$ to $1$, either $1/2$ of the probability mass moves from $t_1$ to $t_3$ (when $b_1 = 0$) or no move (when $b_1 = 1$).

  Similarly, suppose that there exists a universally stable sampling algorithm $\sigma$ that implements $q$. Let $r_A = \{r| \sigma(q_{00}, r) = t_1\}$ and $r_B = \{r | \sigma(q_{00}, r) = t_2\}$.

  Following the same argument in Example~\ref{exmpl:4-token}, consider the bid profile $(b_1, b_2)$ moves along the path $(0, 0) \rightarrow (1, 0) \rightarrow (1, 1)$, we must have
  \[\sigma(q_{10}, r) = \sigma(q_{11}, r) = \left\{\begin{array}{ll}t_1, & r \in r_A \\ t_3, & r \in r_B\end{array}\right..\]
  However, consider the bid profile $(b_1, b_2)$ moves along the path $(0, 0) \rightarrow (1, 0) \rightarrow (1, 1)$, we must have
  \[\sigma(q_{01}, r) = \left\{\begin{array}{ll}t_3, & r \in r_A \\ t_2, & r \in r_B\end{array}\right., \qquad \sigma(q_{11}, r) = \left\{\begin{array}{ll}t_3, & r \in r_A \\ t_1, & r \in r_B\end{array}\right..\]
  We end up with a contradiction on the value of $\sigma(q_{11}, r)$.
\end{example}

\section{Omitted Proofs from Section~\ref{sec:aggregation_functions}}\label{app:proofs-agg}

\subsection*{Proof of Proposition~\ref{prop:kl_aggreg}}

\begin{proof}[Proof of Proposition \ref{prop:kl_aggreg}]
  We prove the theorem by showing that the loss $\L_{\KL}^{\bar \mu}$ and the loss in equation \eqref{eq:kl_aggreg_2} differ by a constant and hence have the same minimizer. For  $B = \sum_i b_i$ and a fixed $x$ we will show that:
  \[B \cdot D_\mathsf{KL}(\textstyle \sum_i \frac{b_i}{B} \mu_i(\cdot \vert x) \Vert f^W(x)) - \sum_i b_i D_\mathsf{KL}(f_i(x) \Vert f^W(x)) = \mathsf{const}.\]
  For notation simplicity, we omit the parameters $x$ and $W$ when it is clear from the context and write
  $\sum_i b_i f_i(x) / B = \bar f(x)$.
  Below we treat any term that doesn't depend on $f$ as a constant:
  \begin{align*}
    \sum_i b_i D_\mathsf{KL}(f_i \Vert f) &= \sum_i b_i H(f_i) - \sum_y b_i f_i(y | x) \ln f(y | x)  \\
    &= - B \sum_y \frac{\sum_i b_i f_i(y | x)}B \cdot \ln f(y | x) + \sum_i b_i H(f_i)  \\
    &= -B \cdot H(\bar f, f) + \sum_i b_i  H(f_i) \\
    &= B\cdot D_{\mathsf{KL}}(\bar f \Vert f) - B \cdot H(\bar f) + \sum_i b_i H(f_i)  \\
    &= B \cdot D_{\mathsf{KL}}(\bar f \Vert f) - \mathsf{const}.
  \end{align*}
  To complete the proof, observe that if $f_i$ is the unconstrained minimizer of $\L_{\KL}^{\mu_i}(f)$ we must have $f_i(y \vert x) = \mu_i(y \vert x)$.
  \end{proof}

\subsection*{Proof of Lemma~\ref{lemma:efficient}}

\begin{proof}[Proof of Lemma~\ref{lemma:efficient}]
Let $B = \sum_i b_i$ and consider $\Loss_{\mathsf{KL}}$:

\begin{align*}
  \Loss_{\mathsf{KL}} &= \sum_{i\in[n]} b_i \cdot (H(p_i,q) - H(p_i))  \\
  &= - \sum_{i \in [n]} b_i \cdot \sum_{t \in [T]} p_i(t) \ln q(t) - \sum_{i \in [n]} b_i \cdot H(p_i)   \\
  &= - \sum_{t \in [T]} \ln q(t) \sum_{i \in [n]} b_i \cdot p_i(t) - \sum_{i \in [n]} b_i \cdot H(p_i)   \\
  &= - B \cdot \sum_{t \in [T]} \frac{\sum_{i \in [n]} b_i \cdot p_i(t)}{B} \ln q(t) - \sum_{i \in [n]} b_i \cdot H(p_i)   \\
  &= \sum_{i \in [n]} B \cdot H(q_{\mathsf{KL}}, q) - b_i \cdot H(p_i).
\end{align*}

By Gibbs' inequality, the cross entropy $ H(q_{\mathsf{KL}}, q)$ is minimized if and only if $q = q_{\mathsf{KL}}$. Hence this is also the minimizer of $\Loss_{\mathsf{KL}}$. %
\end{proof}

\subsection*{Proof of Proposition~\ref{prop:rl_aggreg}}

\begin{proof}
For a fixed $x$, $f^*(y \vert x)$ can be obtained by solving:
$$\begin{aligned}
\max_{f(\cdot \vert x)} \text{ } & \sum_y f(y \vert x) \bar r(x,y) - \beta D_\mathsf{KL}(f(x)\Vert f^{\mathsf{SFT}}(x)) \\
\text{s.t. } & \sum_y f(y \vert x) = 1 \text{  and  }  f(y \vert x) \geq 0.
\end{aligned}$$

By the standard KKT conditions, the solution has the form:
$$f^*(y \vert x) = f^{\mathsf{SFT}}(y \vert x) e^{\bar r(x,y) / \beta } C_x.$$
where $C_x$ is a normalization constant to ensure $\sum_y f^*(y \vert x) = 1$. For the same reason, we have that:
$$f_i(y \vert x) = f^{\mathsf{SFT}}(y \vert x) e^{ r_i(x,y) / \beta } C_{i,x}.$$

If $f^\circ$ is the function minimizing problem \eqref{eq:prop_rl}, then we can apply KKT conditions to obtain:
$$ f^\circ(y \vert x) = \exp\left( \frac{1}{B} \sum_i b_i \ln f_i(y \vert x) \right) \cdot C'_x$$
for normalization constants $C'_x$ and $B = \sum_i b_i$. Replacing the formula for $f_i(x)$ from the previous line, we obtain that:
\begin{align*}
f^\circ(y \vert x) & = \exp\left( \frac{1}{B} \sum_i b_i \left( r_i(x,y)/\beta + \ln f^{\mathsf{SFT}}(y \vert x)\right) \right) \cdot C''_x\\
& = \exp\left(   \bar r(x,y)/\beta + \ln f^{\mathsf{SFT}}(y \vert x)\right) \cdot C''_x = f^*(y \vert x).
\end{align*}
This completes the proof.
\end{proof}

\hybridtext{\section{An Example With Competing Ads}\label{sec:demo-compete}

}{}

\end{document}